\newcommand{\ket}[1]{|#1\rangle}
\newcommand{\bra}[1]{\langle #1|}
\newcommand{\e}[0]{\mathrm{e}}
\renewcommand{\iota}[0]{\mathrm{i}}
\newcommand{\dw}[1]{\mathrm{d}#1\omega}
\newcommand{\identity}{\mathds{1}}
\newcommand{\set}[1]{\left\{#1\right\}}
\renewcommand{\vec}[1]{\boldsymbol{#1}}
\newcommand{\defeq}{\coloneqq}
\newcommand{\imm}[1]{\mathrm{imm}^{\Yboxdim{5pt}\yng(#1)}}
\newcommand{\coeff}[3][\vec{\tau}]{\alpha_{#3}^{\Yboxdim{5pt}\yng(#2)}(#1)}
\newcommand{\dispyng}[1]{\Yboxdim{8pt}\yng(#1)}
\newcommand{\inlineyng}[1]{\Yboxdim{6pt}\yng(#1)}
\newcommand{\supyng}[1]{\Yboxdim{5pt}\yng(#1)}
\newtheorem{theorem}{Proposition}
\begin{document}
\title{Permutational symmetries for coincidence rates in~multi-mode~multi-photonic~interferometry}
\author{Abdullah Khalid}
\email{akhali@ucalgary.ca}
\affiliation{Institute for Quantum Science and Technology, University of Calgary, Calgary, Alberta, T2N 1N4, Canada}
\author{Dylan Spivak}
\email{djspivak@lakeheadu.ca}
\affiliation{Department of Physics, Lakehead University, Thunder Bay, Ontario P7B 5E1, Canada}
\author{Barry C. Sanders}
\email{sandersb@ucalgary.ca}
\affiliation{Institute for Quantum Science and Technology, University of Calgary, Calgary, Alberta, T2N 1N4, Canada}
\affiliation{Program in Quantum Information Science, Canadian Institute for Advanced Research, Toronto, Ontario M5G 1M1, Canada}
\author{Hubert de Guise}
\email{hubert.deguise@lakeheadu.ca}
\affiliation{Department of Physics, Lakehead University, Thunder Bay, Ontario P7B 5E1, Canada}

\begin{abstract}
  We obtain coincidence rates for passive optical interferometry by exploiting the permutational symmetries of partially distinguishable input photons, and our approach elucidates qualitative features of multi-photon coincidence landscapes. We treat the interferometer input as a product state of any number of photons in each input mode with photons distinguished by their arrival time. Detectors at the output of the interferometer count photons from each output mode over a long integration time. We generalize and prove the claim of Tillmann et al.\ [\emph{Phys.~Rev.~X} {\bf 5} 041015 (2015)] that coincidence rates can be elegantly expressed in terms of immanants. Immanants are functions of matrices that exhibit permutational symmetries and the immanants appearing in our coincidence-rate expressions share permutational symmetries with the input state. Our results are obtained by employing representation theory of the symmetric group to analyze systems of arbitrary number of photons in arbitrarily sized interferometers.
\end{abstract}

\maketitle

\section{Introduction}
Passive optical interferometry is the study of multiple photons interfering in passive interferometers, which are optical devices that can realized by a combination of beamsplitters and phase shifters~\cite{Reck1994}. The distinguishability of the photons plays a crucial role in determining their interference, first studied theoretically by Fearn and Loudon~\cite{Fearn1987,Fearn1988} and experimentally by Hong, Ou and Mandel (HOM)~\cite{Hong1987} for a system of two photons. The interference of many partially distinguishable photons has been of interest recently following the proposal of the BosonSampling problem~\cite{Aaronson2013} and historically due to the interest in linear optical quantum computing~\cite{Knill2001}. These quantum information tasks require the interference of perfectly indistinguishable photons, which state-of-the-art experiments are unable to deliver~\cite{Spring2013,Broome2013,Tillmann2013,Spagnolo2014,Bentivegna2015,Tillmann2015}. Therefore, it is important to characterize the interference of partially distinguishable photons to understand the relevance of experimental implementations of quantum information tasks.

A number of recent studies provide theories of multi-photonic passive interferometry experiments to elucidate qualitative features of optical systems~\cite{Tamma2015,Shchesnovich2015b,Tichy2015,Tillmann2015,Shchesnovich2017}. One research direction has been to determine whether the interference of partially distinguishable photons yields a probability distribution that is close to the probability distribution pertinent to the BosonSampling problem ~\cite{Shchesnovich2014,Shchesnovich2015}. Others have investigated the bunching behavior of partially distinguishable bosons~\cite{Tichy2015} and whether such behavior can be exploited to create hypothetical particles that possess arbitrary permutational symmetries~\cite{Tichy2015}. It has been suggested that pairwise distinguishability may not be sufficient to characterize the interference of many photons~\cite{Menssen2017}. Partial distinguishability can be controlled via degrees of freedom such as path, time-of-arrival at the interferometer or polarization, and control via polarization can be achieved by doubling the number of paths instead~\cite{Dhand2015}.

We aim to build upon a recently proposed description of multi-photonic interferometry experiments, that relates the interference of photons to the symmetries of their state~\cite{SiHui2013,deGuise2014,Tillmann2015}. For instance, the HOM effect occurs because two perfectly indistinguishable photons maximally interfere because their state is symmetric in the exchange of the photons, but two partially indistinguishable photons partially interfere because their state is a linear combination of the fully-symmetric and anti-symmetric basis states. The work we build upon analyzes multi-photonic systems using the representation theory of the symmetric group, where previous authors limit themselves to systems with exactly one photon in each input and output mode. They show that by exploiting the permutational symmetries of multi-photon states the photonic Hilbert space decomposes into subspaces, where elements in each subspace possess known permutational symmetries~\cite{Chen1989}. Distinguishability of the input photons determines which subspaces their state lives in. For instance, if photons are fully indistinguishable, then, because they are bosons, their state is fully symmetric under any permutation of the photons. As their distinguishability increases, the multi-photon state extends into other subspaces.

Tillmann {\it et al.}~\cite{Tillmann2015} have advanced that the coincidence rate can be expressed in closed form in terms of immanants of the scattering matrix. Immanants are matrix functions constructed from the symmetric group; some of them, including the permanent, are known to be hard to compute~\cite{Scheel2004,Burgisser2000,Brylinski2003}. Specifically, coincidence rates are sums of squared moduli of linear combinations of immanants where the coefficients in the sum are functions of the photonic distinguishability. Here we assume that photons are only distinguishable by their time-of-arrival at the interferometer.

Here we present major extensions to the results of Refs.~\cite{SiHui2013,deGuise2014,Tillmann2015}. We discuss arbitrary input and output configuration of partially distinguishable photons, i.e.\ possibly multiple photons in each input and output mode. We employ the term input configuration to refer to fixed number of photons in each input mode and the term output configuration to refer to a fixed number of counts at detectors located in each output mode. We use symmetry arguments to show that, if there are multiple photons in some input or output modes, then subspaces possessing certain symmetries do not appear in the Hilbert space decomposition. 

One of the aforementioned works~\cite{Tillmann2015} provides a procedure to calculate coincidence rates for arbitrary total number of photons. Crucial to the procedure is the factorization of the coincidence rate into a matrix product~\cite{deGuise2014,Tillmann2015}. One of these matrices, called the rate matrix, features in other partial indistinguishability theories~\cite{Shchesnovich2014,Shchesnovich2015b,Tichy2015}. The elegant decomposition of the rate into a matrix product provides insights into the symmetry of the coincidence rate because the symmetric group links partial distinguishability to representations, and allows for the expression of the rates in terms of immanants. Unfortunately, it is not shown that the Tillman {\it et al.}~\cite{Tillmann2015} procedure is correct and works in all cases. 

Here we fill this gap and extend their results by providing procedures to understand coincidence rates at the output modes of a passive interferometer for any given input and output photon configuration, and proving that these procedures work. We show that for any input and output configuration, the coincidence rate can always be factored. Furthermore, we prove that the rate matrix carries a representation of the symmetric group, and hence can be block-diagonalized by standard methods~\cite{Chen1989}, leading to the expression of the coincidence rates in terms of immanants. An example where this extension is immediately useful is given in Sec.~\ref{sec:hom}, where we analyze corrections to the HOM effect.

We use symmetry arguments to show, in Sec.~\ref{sec:hom} and more generally, that, if there are multiple photons in some input or output modes, then subspaces possessing certain symmetries do not appear in the Hilbert space decomposition. Relevant computational details for the problem of four photons in two modes are found in Appendix~\ref{sec:appendixexample}. Hence, given knowledge of the distinguishability of the input photons, we are able to identify which immanants appear in the coincidence rate expression using physical arguments alone. Finally, we discuss, again in Appendix~\ref{sec:appendixexample}, how the problem reducing the rate matrix to block diagonal form can be cast as an eigenvalue problem using class operators. This implies in particular that when some subspaces are known to be excluded by symmetry arguments, one can better focus the computational resources on the relevant subset of eigenvectors and related immanants.

We also show in Secs.~\ref{sec:permutedphotons} and \ref{sec:ratesimmanants} that, under a rearrangement of photons at the input of the interferometer, the coincidence rates at the output of the interferometer are covariant (but generally not invariant): the rates are expressed in terms of the same immanants but with different coefficients. We show these coefficients transform linearly under the rearrangement of the input photons. Therefore, once the rate for one input is known, the rates for any permuted input is easy to calculate. Finally, we discuss in Sec.~\ref{sec:permutedmodes} the transformation of rates under permutation of modes.

Our formalism thus provides an intuitive qualitative understanding of how the permutational symmetries of distinguishable photons determine their interference. Our approach can be used to better understand the effect of partial distinguishability in modeling passive interferometry experiments aiming to perform quantum information tasks such as linear optical quantum computing~\cite{Knill2001} and BosonSampling~\cite{Aaronson2013,Spring2013,Broome2013,Tillmann2013,Spagnolo2014,Bentivegna2015,Tillmann2015}. By exploiting symmetries, we can also reduce some calculational tasks and provide additional insights into relations between the coincidence rates in various situations.

We aim to focus on practical aspects of our methods, therefore, we first present our results for an example of four photons interfering in a three-mode interferometer in Sec.~\ref{sec:example}. Our second example, discussed in Sec.~\ref{sec:hom}, is a study of the HOM experiment~\cite{Hong1987} for which the sources produce more than one photon in each input mode. Technical details and derivations of these examples have been postponed to appendices, which can be pursued by the avid reader.  In Sec.~\ref{sec:arbitraryrates} we generalize these results for any multi-mode multi-photon inputs and outputs. We summarize our results and provide a conclusion in Sec.~\ref{ref:conclusion}.

\section{Four photons in a three-mode interferometer}
\label{sec:example}

\begin{figure*}[!ht]
  \begin{center}
    \null\hfill
    \subfloat[]{\includegraphics{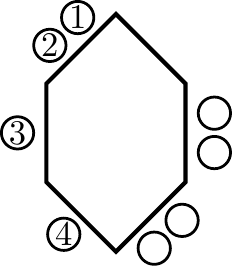}
      \label{fig:canonicalinputexample}}
    \hfill
    \subfloat[]{\includegraphics{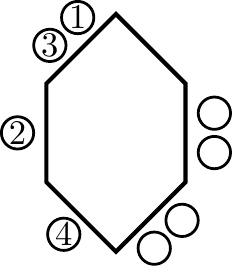}
      \label{fig:permutedphotonsexample}}
    \hfill
    \subfloat[]{\includegraphics{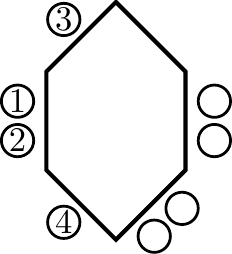}
      \label{fig:permutedmodesexample}}
    \hfill\null
  \end{center}
  \caption{An interferometer (hexagon) with three inputs (left edges) and three outputs (right edges). The numbered and unnumbered circles represent photons at the input and output ports. The photons at the input are numbered as their distinguishability can be controlled but those at the output are unlabeled as the detectors only count the number of photons in each output port. The input states in the three examples are (a) $\ket{211;1123;\vec{\tau}}$, (b) $\ket{211;1213;\vec{\tau}}$, and (c) $\ket{121;2213;\vec{\tau}}$.}
  \label{fig:examples}
\end{figure*}

Our main results can be appreciated by analyzing some relatively simple examples. Figure~\ref{fig:examples} shows three ways that four photons can interfere in a three-mode passive interferometer. We discuss how for such systems the Hilbert space decomposes into a direct sum of permutationally invariant subspaces. We show that the coincidence rates inherit these permutational symmetries. Finally, we explain how the rate expressions for these examples can be obtained from each other.

The examples shown in Fig.~\ref{fig:examples} can be described using four pieces of information: the path of the photons (encoded in two strings $\vec{\eta}$ and $\vec{\upsilon}$), the temporal state of the photons (encoded in the string $\vec{\tau}$), the measurement result (encoded in the string $\vec{\mu}$), and the action of the interferometer (encoded in the matrix $U$). We describe and discuss each of these now. 

First, we describe the configuration of the photons at the input of the interferometer. We label an input configuration by two strings. The first string $\vec{\eta}$, called the input \emph{mode-occupation} string, records the number of photons in each input mode. For the three examples $\vec{\eta}$ is $211$,$211$ and $121$, respectively. If the photons are all indistinguishable this is all that is needed to describe the configuration of the photons.

If the photons are distinguishable or partially distinguishable, more information is needed. The string $\vec{\upsilon}$, called the \emph{photon-occupation} string, stores in which mode each of the four photons are placed. For the first example, $\vec{\upsilon}$ is $1123$ because the first two photons are in the first mode while the third and fourth photons are in the second and third mode respectively. For the other two examples, $\vec{\upsilon}$ is $1213$ and $2213$ respectively. The string $\vec{\upsilon}$ contains all information needed to label the configuration of distinguishable photons. The string $\vec{\eta}$ can be found if $\vec{\upsilon}$ is known, but we specify both in order to simplify mathematical expressions that appear in this paper.

Photons have degrees of freedom besides path such as spectra, polarizations, and times of arrival at the interferometer. Therefore, to define a multi-mode multi-photon state we need to specify the labels of each of these degrees of freedom. We assume that the distinguishing degrees of freedom are under the experimentalist's control and the distinguishability of the photons with respect to each other can be controllably tuned. To keep the analysis simple we assume the photons arrive at different times at the interferometer but are otherwise identical. Therefore, a $4$-tuple $\vec{\tau}$ of arrival times is used to characterize the distinguishability of the photons, where  $\tau_i\in\mathbb{R}$ is the time-of-arrival of the $i$-th photon.

Relative time-of-arrival is effective for controlling distinguishability,
but photons can have other internal degrees of freedom such as polarization or orbital angular momentum, which needs another form of distinguishability control. However, polarization and other degrees of freedom are easily incorporated into scalar-field interferometry by converting the internal degree of freedom to a path degree of freedom~\cite{Dhand2015}. Mathematically, conversion of polarization to path follows the prescription that including
``combinations of polarizers and beam splitters, \dots{} one extends to $U(2n)\supset U(n)\times U(2)$'', as long as the same polarization transformation is applied to all paths~\cite{Rowe1999}. Polarization control has been employed for the recent demonstration of the triad phase for three-photon three-channel interferometry~\cite{Menssen2017}. Our framework and expression found in this paper can be generalized to multiple degrees of freedom in a straightforward manner. In summary, we represent multi-photon states as $\ket{\vec{\eta};\vec{\upsilon};\vec{\tau}}$. The input states of the three examples shown in Fig.~\ref{fig:examples} are $\ket{211;1123;\vec{\tau}}$, $\ket{211;1213;\vec{\tau}}$ and $\ket{121;2213;\vec{\tau}}$, respectively.

This labeling of input states reveals that these inputs are related to each other by permutations. The input shown in Fig.~\subref*{fig:permutedphotonsexample} is related to the canonical input (Fig.~\subref*{fig:canonicalinputexample}) by a permutation of the photons. Specifically, the second and third photons have been swapped, which is apparent by comparing $\vec{\upsilon} = 1123$ and $\vec{\upsilon}'=1213$. The input shown in the third example Fig.~\subref*{fig:permutedmodesexample} is related to the canonical input by a permutation of the input modes: the first and second modes have been swapped, which can be determined by comparing their photon-occupation strings. As permutations can be implemented by linear transformations on the photons or modes, for a fixed output event the coincidence rates for two permutations can be obtained from each other by linear transformations.

In order to study the permutation of photons and modes formally, we require the symmetric group $S_n$, which is the set of all possible permutations of $n$ objects. The objects can be the $n$ photons in the system, or the $n$ possible modes of the system. The action of this group on photons and modes captures the types of permutations for which Figs.~\subref*{fig:permutedphotonsexample}- \subref{fig:permutedmodesexample} are examples. For four photons in three modes, the group $S_4$ is used to define the action $P_\sigma$ for $\sigma \in S_4$ and the group $S_3$ is used to define the action $Q_{\sigma'}$ for $\sigma'$ in $S_3$. While we formally define these action in Sec.~\ref{sec:arbitraryrates}, here we note that
\begin{align}
  P_{(23)}\ket{211;1123;\vec{\tau}} &\defeq \ket{211;P_{(23)}1123;\vec{\tau}} \nonumber\\&= \ket{211;1213;\vec{\tau}}, \\
  Q_{(12)}\ket{211;1123;\vec{\tau}} &\defeq \ket{Q_{(12)}211;Q_{(12)}1123;\vec{\tau}} \nonumber \\&= \ket{121;2213;\vec{\tau}}. 
\end{align}

The next aspect of the passive interferometry experiment to be described is the interferometer. An interferometer is characterized by a matrix $U$ that specifies the linear transformation undergone by the input state. This matrix might be unitary, i.e.\ $UU^\dagger = U^\dagger U=\identity$, but is assumed to be a general linear complex matrix if the interferometer is lossy or if only $m$ input and output modes of a larger interferometer are used. For a three-mode interferometer shown in Fig.~\ref{fig:examples}, $U$ is a $3\times 3$ matrix that is identical for all three examples.

Finally, we outline the measurement scheme. We treat the detectors at the output of the interferometer as devices that count the total number of photons in a mode regardless of when they arrived at the detector. This lack of information about the paths of the photons through the interferometer generates the observed interference between the photon paths~\cite{Fearn1988}. This is why the photons at the outputs in Fig.~\ref{fig:examples} are not labeled. As we are interested in finding how the coincidence rate changes when the input state is modified, we fix the output to be same for all three examples shown in Fig.~\ref{fig:examples}. This output event is the one where photon detectors placed in the three output modes detect zero, two and two photons respectively, conveniently represented by the output mode-occupation string $\vec{\mu}=022$. The following analysis can be performed for any other output event by using the appropriate output mode-occupation string $\vec{\mu}$. With our description of the components of passive interferometry experiments complete, we can now turn to the task of calculating coincidence rates for the three examples shown in Fig.~\ref{fig:examples}. This is the content of the next three subsections.

\subsection{Rate for the canonical input}
\label{sec:mainexamplerate}
The expression and calculations of the coincidence rate in terms of immanants is a three-part procedure. First, the Hilbert space of the photonic systems is decomposed into permutationally-symmetric subspaces with the aid of an appropriately constructed basis-transformation matrix $V$. Next the coincidence rate is calculated in the form suggested in Refs.~\cite{deGuise2014,Tillmann2015}. In this way of calculating the coincidence rate $C(\vec{\tau})$, it is expressed as
\begin{equation}
  C(\vec{\tau}) = \vec{u}^\dagger R(\vec{\tau}) \vec{u}, \label{eq:rateform}
\end{equation}
where $\vec{u}$ is a vector, called the \emph{interferometer} vector, which depends only on the properties of the interferometer, whereas the \emph{rate} matrix $R$ captures the effects of the time-of-arrival of the photons on the coincidence rate. For the examples in Fig.~\ref{fig:examples}, $\vec{u}$ and $R$ are of dimension $4!/2!2!0! = 6$. The components of $\vec{u}$ are polynomials in the entries of the matrix $U$ and given explicitly for this case in Eq.~\eqref{eq:interferometervectorexample}. Additional properties of $\vec{u}$ and $R$ are described in Sec.~\ref{sec:arbitraryrates}. Finally, $\vec{u}$ and $R$ are rotated by $V$, which transforms the entries of $\vec{u}$ into sums of immanants and $R$ into a block-diagonal matrix. Hence, using Eq.~\eqref{eq:rateform} the coincidence rate may be expressed in terms of immanants.

In this subsection, we only give a rough outline of how to construct $V$, and in Appendix~\ref{sec:appendixexample} work out fully the $V$ for the examples of this section. The reader is directed to a textbook such as Ref.~\cite{Chen1989} for further details. Similarly, exact expressions for $\vec{u}$ and $R$ are delayed to Appendix~\ref{sec:appendixexample} because similar expressions have been derived in other places~\cite{SiHui2013,deGuise2014,Shchesnovich2014,Shchesnovich2015b,Tichy2015,Tillmann2015}, but we do provide general expressions in Sec.~\ref{sec:arbitraryrates}. Our main concern here is to understand how the symmetries of the coincidence rate $C(\vec{\tau})$ depends on the input and output configuration, and the distinguishability of the photons.

We begin by decomposing the $3^4$-dimensional Hilbert space $H$ of photon mode-occupations of four photons in three modes. We are interested in decomposing this space into the direct sum of subspaces, where each subspace has known permutational symmetries. This decomposition is divided into two steps. First note that the full Hilbert space $H$ is composed of orthogonal subspaces, where each subspace $H^{\vec{\mu}}$ corresponds to a distinct mode occupation $\vec{\mu}$, i.e.\
\begin{equation}
  H = \bigoplus_{\stackrel{\mu_i\in \set{0,\dots,4},}{\sum_i \mu_i = 4}} H^{\vec{\mu}}. \label{eq:photonconfigurationspaces}
\end{equation}
One of these subspaces is $H^{022}$, which corresponds to the output event of Fig.~\ref{fig:examples}. These orthogonal subspaces can be further decomposed into permutationally symmetric subspaces. As the procedure is the same for all of them, we only discuss the decomposition of $H^{022}$. 

The $H^{022}$ subspace is $4!/(0!2!2!)=6$ dimensional. A natural basis for this subspace is spanned by the six distinct vectors obtained by applying the permutation operator $P_\sigma$ to $\ket{022;2233}$ for $\sigma\in S_4$. Formally we can express this basis as
\begin{equation}
  B = \set{P_\sigma\ket{022;2233;\vec{\tau}}: \sigma\in S_4}. \label{eq:naturalbasis}
\end{equation}
The action $P_\sigma$ of $S_4$ on this basis defines a six-dimensional reducible representation $\Gamma$ of $S_4$, the formal definition of which is provided in Sec.~\ref{sec:arbitraryrates}. In simple terms, given any element $\sigma \in S_4$, $\Gamma(\sigma)$ is the $6\times 6$ permutation matrix that exchanges the elements of $B$ according to how $P_\sigma$ acts on $B$. Using standard representation theory methods, the reducible representation $\Gamma$ can be decomposed into a direct sum of irreducible representations (irreps)~\cite{Chen1989} of $S_4$.

The irreps of $S_4$ are in one to one correspondence with the conjugacy classes of $S_4$, where each conjugacy class can be labeled by a partition of $4$. Partitions can be graphically depicted by Young diagrams, so each irrep of the permutation group can be labeled by Young diagrams. For $S_4$, the five possible $4$-box Young diagrams are $\inlineyng{4}, \inlineyng{3,1}, \inlineyng{2,2}, \inlineyng{2,1,1}, \inlineyng{1,1,1,1}$ (the diagram $\inlineyng{3,1}$ corresponds to the partition $3+1$ etc.). Each of these diagrams is a label for an irrep of $S_4$. The Young diagram corresponding to each irrep indicates the symmetries of the irrep, where the irrep is symmetric across each row and anti-symmetric across each column.

Using, standard representation theory methods~\cite{Chen1989}, $\Gamma$ decomposes as
\begin{equation}
  \Gamma = \dispyng{4} \oplus \dispyng{3,1} \oplus \dispyng{2,2}, \label{eq:exampledecomposition}
\end{equation}
where $\inlineyng{4}$ is one-dimensional, $\inlineyng{3,1}$ is three-dimensional and $\inlineyng{2,2}$ is two-dimensional. The other two irreps of $S_4$, $\inlineyng{2,1,1}$ and $\inlineyng{1,1,1,1}$, do not appear in the decomposition for our example. This is because the output event $\vec{\mu}=022$, requires symmetry in the placement of the third and fourth photons since they are in the same mode, but $\inlineyng{2,1,1}$ and $\inlineyng{1,1,1,1}$ are anti-symmetric in the third and fourth boxes. Further details on how to obtain this decomposition are provided in Eq.~\eqref{eq:221decomposition}. This decomposition, also called block-diagonalization, of $\Gamma$ can be realized by a matrix we label as $V$, explicitly provided in Eq.~\eqref{eq:basistransformation22}. This matrix is used to rotate $\vec{u}$ and $R$. 

The action of the permutation group commutes with the action of the unitary group on photonic states~\cite{weyl1950,brauer1973,Rowe2012}, i.e.\,
\begin{equation}
  [U,P_\sigma] = 0, \qquad \sigma \in S_4.
\end{equation}
Consequently, the subspace $H^{022}$ decomposes as
\begin{equation}
  H^{022} = H^{\supyng{4}} \oplus H^{\supyng{3,1}} \oplus H^{\supyng{2,2}},
\end{equation}
where, in an obvious notation, $H^{\supyng{4}}$ has the same permutational symmetries as $\inlineyng{4}$, etc. The permutational symmetries of these subspaces become clear when we discuss how the coincidence rate varies as the distinguishability of the photons is tuned. 

As mentioned in the introduction and later proved in Sec.~\ref{sec:arbitraryrates}, the coincidence rate at the output of the interferometer can be expressed in the form~\eqref{eq:rateform}. This is a generalization of the terminology and formalism of previous work which was restricted to one photon in each input and output mode~\cite{deGuise2014,Tillmann2015}. Expressing the rate in the form~\eqref{eq:rateform} is convenient for a number of reasons. First, this form separates the effects on the rate of the interferometer and the time-of-arrival of the photons. The interferometer vector $\vec{u}$ depends only on the entries of the interferometer matrix $U$ but not on the time-of-arrival of the photons. In the basis $B$~\eqref{eq:naturalbasis}, the entries of the interferometer vector are the amplitudes of the transitions from the input state $\ket{211;1123;\vec{\tau}}$ to the vectors $B$ associated with the output event $\vec{\mu}=022$. 

The rate matrix $R$ depends on the time-of-arrival of the photons, which determines their distinguishability. The entry $R_{ij}$ of the rate matrix is determined by how distinguishable are the $i$th and $j$th transitions whose amplitudes are recorded in the interferometer vector. The rate matrix is a square Hermitian matrix.

The rate matrix does not depend in any way on the input configuration of the photons. For a fixed output event such as $\vec{\mu}=022$, the rate matrix is the same irrespective of the input. Concretely, for all inputs shown in Fig.~\ref{fig:examples}, and for any other input of four photons which are not related by permutations to them (say $\ket{400;1111;\vec{\tau}}$), the rate matrix is the same. The effect of the input configuration on the rate expression is only felt through the interferometer vector. 

Our main result, proved in Sec.~\ref{sec:arbitraryrates}, is that the rate matrix carries a reducible representation $\Gamma$ of $S_4$. This means, that the rate matrix can be expressed as a linear combination of the representations of all $\sigma$ in $S_4$. Mathematically, we write
\begin{equation}
  R(\vec{\tau}) = \sum_{\sigma\in S_4} \Delta_\sigma(\vec{\tau})\Gamma(\sigma),
\end{equation}
where $\Gamma(\sigma)$ is the representation of $\sigma$ and $\Delta_\sigma(\vec{\tau})$ is a distinguishability-dependent coefficient. The expression of the rate matrix in this form implies that the basis $V$ that reduces $\Gamma$ also block-diagonalizes the rate matrix. In the block-diagonal form, each block of the rate matrix is associated with an irrep of $S_4$ and is of the same size as the dimension of the irrep.  The coincidence rate~\eqref{eq:rate211} is calculated by employing the formula
\begin{equation}
  C(\vec{\tau}) = \left(V\vec{u}\right)^\dagger\left[VR\left(\vec{\tau}\right)V^\dagger\right]\left(V\vec{u}\right). \label{eq:rotatedrate}
\end{equation}

The elements of $V\vec{u}$ are sums of immanants of the scattering matrix $T$, which here is a four-dimensional matrix formed by taking the $\vec{\upsilon}=1123$ rows (input) and $\vec{\xi}=2233$ columns (output) of the interferometer matrix $U$; i.e.\ $T$ has entries
\begin{equation}
  T_{ij} \defeq U_{\upsilon_i,\xi_j}. \label{eq:scatteringmatrixdef}
\end{equation}
Immanants are matrix functions associated with the irreps of the symmetric group. The irrep labeled by Young diagram $\lambda$ is associated with the immanant
\begin{equation}
  \text{imm}^\lambda T \defeq \sum_{\sigma \in S_n}\chi^\lambda(\sigma)\left(\prod_{i=1}^nT_{i\sigma(i)}\right), \label{eq:immanantdef}
\end{equation}
where $\chi^\lambda(\sigma)$ is the character of the permutation $\sigma$ in the irrep $\lambda$.

We define $T_\sigma$ as the matrix $T$ with its rows permuted by $\sigma\in S_4$ (note $T = T_\e$). The set of immanants
\begin{equation}
  \set{\text{imm}^\lambda T_\sigma: \sigma \in S_4},
\end{equation}
has fewer distinct elements than $|S_4| = 24$ because the immanants for two different $\sigma$ may be be equal (up to a constant), i.e.\  its possible that
\begin{equation}
  \text{imm}^\lambda T_\sigma \propto \text{imm}^\lambda T_{\sigma'}.
\end{equation}
For instance, the permanent $\imm{4}T$ is fully symmetric under any permutation of the rows of $T$, i.e.\ for all $\sigma\in S_4$, we find that $\imm{4}T_\sigma = \imm{4}T$. On the other hand, for half the $\sigma\in S_4$, $\imm{3,1}T_\sigma$ is proportional to $\imm{3,1}T$, and for the other half $\imm{3,1}T_\sigma$ is proportional to $\imm{3,1}T_{(13)}$. Finally, due to the multiplicity of photons in input and output modes (repetition of the rows and columns of $T$), for all $\sigma \in S_4$, we find that $\imm{2,2}T_\sigma$ is proportional to $\imm{2,2}T$. Therefore, we can express the coincidence rate only in terms of the set of immanants
\begin{equation}
  \set{\imm{4}T, \imm{3,1}T, \imm{3,1}T_{(13)}, \imm{2,2}T}. \label{eq:examplesetofimmanants}
\end{equation}

We are now ready to express the coincidence rate for  input $\ket{211;1123;\vec{\tau}}$ and output $\vec{\mu}=022$. The expressions for $\vec{u}$, $R$ and $V$, presented in Appendix~\ref{sec:appendixexample}, are plugged into Eq.~\eqref{eq:rotatedrate}, yielding the rate
\begin{align}
  C(\vec{\tau}) =&\hspace{0.4em} \coeff{4}{\e,\e}\left|\imm{4}T\right|^2 + \coeff{3,1}{\e,\e}\left|\imm{3,1}T\right|^2 \nonumber\\&+ \coeff{3,1}{(13),(13)}\left|\imm{3,1}T_{(13)}\right|^2 \nonumber\\ &+ \left[\coeff{3,1}{\e,(13)}\left(\imm{3,1}T\right)^\dagger \imm{3,1}T_{(13)} + \text{cc}\right] \nonumber\\&+ \coeff{2,2}{\e,\e}\left|\imm{2,2}T_{(22)}\right|^2, \label{eq:rate211}
\end{align}
where the coefficients $\alpha^\lambda_{\sigma,\sigma'}(\vec{\tau})$ are labeled by the immanant to which they are attached, $\e$ is the identity element in $S_4$ and cc refers to the complex conjugate of the preceding terms. The $\alpha^\lambda_{\sigma,\sigma'}(\vec{\tau})$ are linear combinations of $\Delta_\sigma(\vec{\tau})$ and the full rate is presented in Eq.~\eqref{eq:examplerate}.

For the scattering matrix in this example there are two different immanants of the $\inlineyng{3,1}$ type and one immanant of the $\inlineyng{2,2}$ type. In general, there are three $\inlineyng{3,1}$ immanants so they can span the three-dimensional subspace labeled by $\inlineyng{3,1}$ in~\eqref{eq:exampledecomposition}. Similarly, in general there are two $\inlineyng{2,2}$ immanants. The full number of immanants does not occur as a result of the multiplicity of photons in the input and output modes. There are relatively simple rules for determining the number of immanants of each type that appear in the rate and these are presented in Sec.~\ref{sec:ratesimmanants}.

The coefficients $\alpha^\lambda_{\sigma,\sigma'}(\vec{\tau})$ in~\eqref{eq:rate211} have well-defined behavior for different regimes of the distinguishability of the photons. We treat these cases one by one. If the photons are all indistinguishable i.e.\ $\forall i,j~\tau_i=\tau_j$, then the coefficients take values such that the rate reduces to
\begin{equation}
  C(\vec{\tau}) = \left|\imm{4}T\right|^2.
\end{equation}
The rate depends only on the permanent of the scattering matrix because permuting indistinguishable photons cannot change the rate, and the permanent is the only immanant that does not change under any permutation of its rows.

If three photons are indistinguishable but the fourth is distinguishable from the three, then the rate depends on not only the permanent but also on the $\inlineyng{3,1}$ type immanants. In this regime, the $\coeff{2,2}{\e,\e}$ in~\eqref{eq:rate211} vanishes. The next case is when the photons are divided into two pairs such that the two photons in each pair are indistinguishable with respect to each other, but the two pairs are mutually distinguishable. In this case all immanants appear in the rate and the rate has its most general form of Eq.~\eqref{eq:rate211}.

It is possible to make the photons even more distinguishable.  A pair of photons can be mutually indistinguishable, but the pair is distinguishable with the third photon which is further distinguishable with respect to the fourth photon. An example coincidence landscape associated with this last scenario is plotted in Fig.~\ref{fig:landscape211}. The first two photons arrive simultaneously at a fixed time, while the third and fourth photons are separately delayed with respect to this pair. The point at the center of the landscape $\tau_3=\tau_4=0$ only depends on the permanent. To calculate the coincidence rate along the line $\tau_3 =0$, the line $\tau_4=0$ or the line $\tau_3=\tau_4$ requires the additional computation of $\inlineyng{3,1}$-type immanants. Any other point on the landscape requires computing all immanants appearing in Eq.~\eqref{eq:rate211}, as in this case none of the coefficients are zero.

The final scenario is one where all four photons are distinguishable with respect to each other. Calculating the coincidence rate at a general point in this case requires the entirety of Eq.~\eqref{eq:rate211}. 

\begin{figure}[t]
  \begin{center}
    \includegraphics[width=0.45\textwidth]{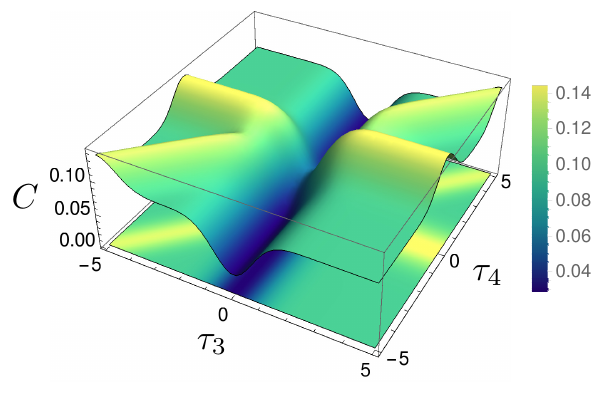}
  \end{center}
  \caption{The coincidence rate $C(\tau)$ for input state $\ket{211;1123;(0,0,\tau_3,\tau_4)}$ and output $\vec{\mu}=022$. The matrix chosen for this plot is provided in Eq.~\eqref{eq:plotmatrix}. The rate is shown both as a three-dimensional plot and as a contour plot at the base of the figure.}
  \label{fig:landscape211}
\end{figure}

We have calculated the coincidence rate for the input state $\ket{211;1123;\vec{\tau}}$ and output event $\vec{\mu}=022$ in terms of the three immanants of the scattering matrix. Using this coincidence expression we have shown that, for this input and output pair three regimes of distinguishability of the photons exist.

\subsection{Rate transformation under permutation of photons}
\label{sec:permutedphotons}
Now we turn to the problem of analyzing how the rate for a fixed output event changes under the permutation of the photons at the input. Fig.~\subref*{fig:permutedphotonsexample} shows an example where the photons at the interferometer input have been permuted with respect to the photons shown in Fig.~\subref*{fig:canonicalinputexample}. The second and third photons have been swapped so the second photon is now in the second input and the third photon is in the first input. This input state is represented by $\ket{211;1213}$.

\begin{figure*}[ht]
  \begin{center}
    \null\hfill
    \subfloat[]{\includegraphics{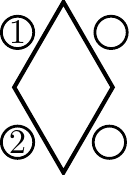}
      \label{fig:hom11}}
    \hfill
    \subfloat[]{\includegraphics{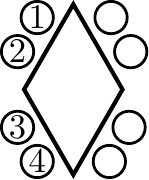}
      \label{fig:hom22to22}}
    \hfill
    \subfloat[]{\includegraphics{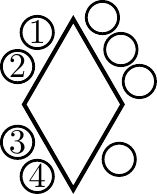}
      \label{fig:hom22to31}}
    \hfill
    \subfloat[]{\includegraphics{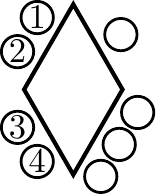}
      \label{fig:hom22to13}}
    \hfill\null
  \end{center}
  \caption{An interferometer (diamond) with two inputs (left edges) and two outputs (right edges). The numbered and unnumbered circles represent photons at the input and output ports. The inputs are (a) $\ket{11;12;\vec{\tau}}$, and (b), (c) and (d)  $\ket{22;1122;\vec{\tau}}$. The outputs are (a) $\vec{\mu}=11,\vec{\xi}=12$, (b) $\vec{\mu}=22,\vec{\xi}=1122$, (c) $\vec{\mu}=31,\vec{\xi}=1112$, and (d) $\vec{\mu}=13,\vec{\xi}=1222$}.
  \label{fig:hominputsoutputs}
\end{figure*}

As discussed above, the action $P_\sigma$ for $\sigma \in S_4$ captures permutations of the photons at the input of the interferometer. Though $S_4$ has $24$ elements there are only twelve distinct ways of placing four photons in the input ports with mode occupations $\vec{\eta}=211$. This is due to the multiplicity of photons in the first input port. The coincidence rates for any of these $12$ distinct permutations have the same form as~\eqref{eq:rate211}, except the coefficients $\alpha^\lambda_{\sigma,\sigma'}$ are different in each case. We now describe how coefficients for one input permutation are related to the coefficients for some other input permutation.

Under the action $P_\sigma$, the vector $\vec{u}$ transforms to $\Gamma(\sigma) \vec{u}$, while the rate matrix remains unaffected as it does not depend on the input configuration. Some straightforward matrix algebra results in the rate $C_\sigma(\vec{\tau})$ for the permuted input in the form
\begin{equation}
  C_\sigma(\vec{\tau}) = \left(V\vec{u}\right)^\dagger \left[V\Gamma\left(\sigma\right)^\dagger R\left(\vec{\tau}\right)\Gamma\left(\sigma\right)V^\dagger\right] \left(V\vec{u}\right). \label{eq:permutedphotons}
\end{equation}
The interferometer vector is unmodified for any $\sigma$ but the rate matrix is rotated by $V\Gamma(\sigma)^\dagger$. The expression~\eqref{eq:rate211} remains covariant. For any permutation of the input photons the rate expression is obtained by mixing the coefficients $\alpha^\lambda_{\sigma,\sigma'}$ linearly into each other.

\subsection{Rate transformation under permutation of modes}
\label{sec:permutedmodes}
The rate also remains covariant under a permutation of the modes. The example shown in Fig.~\subref*{fig:permutedmodesexample} has its input modes permuted with respect to the input modes of the example in Fig.~\subref*{fig:canonicalinputexample}. In total, there are three possible ways of permuting the modes, explicitly specified by the input mode-occupation strings $211$, $121$ and $112$. The action $Q_\sigma$ transforms between these various possible inputs.

The rate for each of these inputs has the form of~\eqref{eq:rate211} except for each input the scattering matrix $T$ is different. The matrix $T$ is a matrix formed by choosing the appropriate rows and columns of the interferometer matrix $U$. Therefore, it is straightforward to calculate the rate for any permutation of the modes once the form~\eqref{eq:rate211} is known.

\section{Hong-Ou-Mandel interference}
\label{sec:hom}

In this section we use our formalism to analyze the HOM effect for sources that sometimes produce two photons. The HOM effect is often used to demonstrate that photon sources are non-classical: classical fields interfering in a balanced beamsplitter have visibility less than half the visibility of two photons interfering similarly~\cite{Mandel1983}. Here we calculate corrections to the coincidence rate for the HOM effect when the sources sometimes produce two photons. We show that expressing the rate in terms of immanants makes it straightforward to calculate features of the coincidence rate.

The HOM setup consists of two photon sources, a beamsplitter and two photon detectors. Moreover, the experimentalist is able to tune the relative distinguishability of the photons created by the two photon sources. In the original HOM experiment~\cite{Hong1987}, the distinguishing degree of freedom was the overlap of the photons in the temporal domain, but it can also be the polarization, frequency or some other degree of freedom. The output fields of the photon sources are injected into the two input ports of the beamsplitter. The photon detectors are coupled to the output ports of the beamsplitter. They are usually bucket detectors that distinguish between zero photons and one or more photons. The output event of interest is the detection of photons in both the output ports of the beamsplitter. 

The photon sources in state-of-the-art experiments are parametric downconversion sources~\cite{Spring2013,Broome2013,Tillmann2013,Spagnolo2014,Bentivegna2015,Tillmann2015} which are non-deterministic photon sources with a photon pair-production probability we denote as $p$. These sources can simultaneously produce two pairs with probability $p^2$. These two possibilities and corresponding output events are shown in Fig.~\ref{fig:hominputsoutputs}.

When only one photon is produced in each input port $\vec{\eta}=11,\vec{\upsilon}=12$ and the output event of interest is $\vec{\mu}=11,\vec{\xi}=12$. This pair of events leads to the coincidence rate
\begin{align}
  C^{12,12}(\vec{\tau}) &= \coeff{2}{\e,\e}\left|\imm{2}T^{12,12}\right|^2 \nonumber \\&\quad + \coeff{1,1}{\e,\e}\left|\imm{1,1}T^{12,12}\right|^2, \label{eq:hom11to11}
\end{align}
where $\vec{\tau} = (\tau_1,\tau_2)$ and for clarity we label both the coincidence rate and the scattering matrix by $\vec{\upsilon}=12$ and $\vec{\xi}=12$. When the photons are indistinguishable ($\tau_1=\tau_2$) then the rate
\begin{equation}
  C^{12,12}(\tau_1,\tau_1) = \left|\imm{2}T^{12,12}\right|^2
\end{equation}
is equal to the modulus squared of the permanent.

When the sources produce two photons in each input port, the input event is labeled by $\vec{\eta}=22$ $\vec{\upsilon}=1122$. There are three possible output events labeled by the pairs $\vec{\mu}=22,\vec{\xi}=1122$; $\vec{\mu}=31,\vec{\xi}=1112$; and $\vec{\mu}=13,\vec{\xi}=1222$, such that there is at least one photon in each output mode (the others can be post-selected away). For all three of these output events the coincidence rate has the form
\begin{align}
  C^{1122,\xi}(\vec{\tau}') &= \coeff[\vec{\tau}']{4}{\e,\e}\left|\imm{4}T^{1122,\xi}\right|^2 \nonumber \\&\quad+ \coeff[\vec{\tau}']{3,1}{\e,\e}\left|\imm{3,1}T^{1122,\xi}\right|^2 \nonumber \\&\quad+ \coeff[\vec{\tau}']{2,2}{\e,\e}\left|\imm{2,2}T^{1122,\xi}\right|^2. \label{eq:hom22toxx}
\end{align}
As the first two input photons are in the first mode and the last two input photons are in the second mode, $\vec{\tau'} = (\tau_1,\tau_1,\tau_2,\tau_2)$.

The total coincidence rate is
\begin{align}
  C_{\text{total}} &= pC^{12,12}(\vec{\tau}) + p^2[C^{1122,1122}(\vec{\tau}') \nonumber\\&\quad+ C^{1122,1112}(\vec{\tau}') + C^{1122,1222}(\vec{\tau}')]. \label{eq:homtotal}
\end{align}
Exact expressions for these rates are presented in Appendix~\ref{sec:appendixhom}.

Figure~\ref{fig:homrate} depicts the total coincidence rate as well as the rate $C^{12,12}(\vec{\tau})$ as a function of $\tau_2$ for $\tau_1=0$. The presence of multi-photon effects increases the coincidence rate, but this increase depends on the distinguishability of the photons. Figure~\ref{fig:homrate} shows that the increase is least when the photons are completely indistinguishable and increases monotonically as the photons become more distinguishable. For indistinguishable photons the difference in rates
\begin{align}
  h_{\text{f}} &= p^2(\left|\imm{4}T^{1122,1122}\right|^2 + \left|\imm{4}T^{1122,1112}\right|^2 \nonumber\\&\qquad+ \left|\imm{4}T^{1122,1222}\right|^2),
\end{align}
is the sum of squared permanents. As the two pair of photons become more distinguishable, the other immanants appear in the coincidence rate and consequently the difference $h_{\text{c}}$ is greater than the difference $h_{\text{f}}$. The exact expression for $h_{\text{c}}$ is provided in Appendix~\ref{sec:appendixhom}.

We have shown that the shape of the HOM coincidence rate for non-deterministic photon sources differs qualitatively from the rate for perfect single-photon sources. We expect that interference experiments with higher number of photons in larger interferometers will yield similar qualitative differences if the inputs are contaminated with multiple photons.

\begin{figure}
  \centering
  \includegraphics[width=0.5\textwidth]{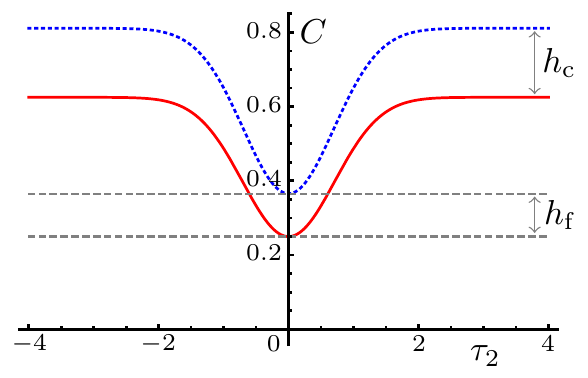}
  \caption{Coincidence rates with two different sources for a HOM experiment using for a $25:75$ beamsplitter. The solid (red) line shows the coincidence rate~\eqref{eq:hom11to11} for one photon in each input port. The dotted (blue) line shows the coincidence rate~\eqref{eq:homtotal} for up to two photons in each input port, with $p=0.04$. The rates are drawn as a function of $\tau_2$ while $\tau_1=0$. The difference between the two rates at $\tau_2=0$ is $h_{\text{f}}$ while the difference between the two rates at $\tau_2\to\infty$ is $h_{\text{c}}$.}
  \label{fig:homrate}
\end{figure}

\section{Rates for arbitrary input states}
\label{sec:arbitraryrates}
In this section, we show that for any passive interferometry experiment, the coincidence rate can always be expressed in the form of~\eqref{eq:rateform}. In Sec.~\ref{sec:passiveinterferometery} we formally define our model of passive interferometry experiments. In Sec.~\ref{sec:symmetricgroupaction} we define the action of the symmetric group on photonic states. With this background established we are able to calculate coincidence rates for any input and output in Sec.~\ref{sec:ratederivation}.

\subsection{Passive interferometry}
\label{sec:passiveinterferometery}
We define multi-mode multi-photon states, the action of the interferometer and the behavior of photon-counting detectors. Photonic states are defined using the monochromatic photonic creation and annihilation operators, $a_i(\omega_k),a_j^\dagger(\omega_l)$, for modes $i,j$ and frequencies $\omega_k,\omega_l$, satisfying the canonical commutation relations
\begin{equation}
\left[a_i(\omega_k),a_j^\dagger(\omega_l)\right] = \delta_{ij}\delta(\omega_k-\omega_l). \label{eq:commutationrelations}
\end{equation}
Realistic photons are not monochromatic, but rather have a frequency spectrum captured by a complex-valued function $\phi(\omega)$. A single photon in mode $i$ has state
\begin{equation}
  \ket{1,\tau}_i \defeq A_i^\dagger(\tau)\ket{0} \defeq \int \mathrm{d}\omega\phi(\omega)\e^{-\iota\omega\tau}a_i^\dagger(\omega)\ket{0}, \label{eq:bigadefinition} 
\end{equation}
where $A_i(\tau)$ is the creation operator of a photon that arrives at the interferometer at time $\tau$. 

The state of $n$ photons in $m$ modes is specified using the mode-occupation string, the photon-occupation string and the distinguishability vector as we did in previous sections. The input mode-occupations are specified by a string $\vec{\eta}$ of length $m$, where $\eta_i$ is the number of photons in mode $i\in\set{1,\dots,m}$ and $\sum \eta_i = n$. The photon-occupations are specified by a string $\vec{\upsilon}$ of length $n$, where $\upsilon_i$ is the mode occupied by the $i$-th photon for $i\in\set{1,\dots,n}$. The time-of-arrival of the photons at the interferometer is stored in the length $n$ vector $\vec{\tau}$.

Using $\vec{\eta}$, $\vec{\upsilon}$ and $\vec{\tau}$ we can define the state of $n$ photons in $m$ modes as
\begin{equation}
  \ket{\vec{\eta};\vec{\upsilon};\vec{\tau}} \defeq \frac{1}{\sqrt{\mathcal{N}}}\prod_{i=1}^nA_{\upsilon_i}^\dagger(\tau_i)\ket{0}, \label{eq:fockstate}
\end{equation}
where $\frac{1}{\sqrt{\mathcal{N}}}$ is the normalization of this wavefunction that we evaluate in the proof of Proposition~\ref{th:arbitraryrates} below.

The interferometer is described by an $m\times m$ matrix~$U$. The interferometer transforms the input creation operators to output creation operators according to
\begin{equation}
  a_{i,\text{in}}^\dagger(\omega_j) = \sum_{k=1}^m U_{ik}a_{k,\text{out}}^\dagger(\omega_j), \label{eq:interferometeraction} 
\end{equation}
where, in $U_{ik}$, the first index $i$ denotes the input mode and the second index $k$ denotes the output mode. Elsewhere in this paper, for typographical reasons, we suppress the subscripts $_{\text{in}}$ and $_{\text{out}}$ on the creation and annihilation operators as the difference is clear from context.

We model detectors as photon number counters that output the number of photons detected in a particular mode. For simplicity of presentation we assume that all detectors are identical, lossless at all frequencies and independent of the time-of-arrival of the photons. Much like the input mode-occupation string, the number of photons detected in each output mode are collected in an output mode-occupation string $\vec{\mu}$, where $\mu_i$ is the number of photons detected in the $i$th output mode and $\sum_i\mu_i = n$. The measurement operator $M_{\vec{\mu}}$ is defined using the length $n$ vector
\begin{equation*}
  \vec{\xi} \defeq (\overbrace{1,\dots ,1}^{\mu_1\text{ times}} ,\dots, \overbrace{i,\dots ,i}^{\mu_i\text{ times}} ,\dots, \overbrace{m,\dots ,m}^{\mu_m\text{ times}}).
\end{equation*}
Then the measurement operator element
\begin{align}
  M_{\vec{\mu}} \defeq & \frac{1}{\mu_1!,\cdots,\mu_n!}\int  \mathrm{d}\omega_1\cdots\mathrm{d}\omega_n |\varphi(\omega_1)|^2\cdots |\varphi(\omega_n)|^2\nonumber\\&a_{\xi_1}^\dagger(\omega_1)\cdots a_{\xi_n}^\dagger(\omega_n)\ket{0}\bra{0}a_{\xi_1}(\omega_1)\cdots a_{\xi_n}(\omega_n), \label{eq:measurementoperator}
\end{align}
where the $\varphi(\omega_i)$ specify the spectral range of the detectors. Usually, this spectral range is much broader than the spectral width of the photonic wavefunction~\cite{Eisaman2011,Broome2013}, and therefore everywhere we make the assumption that $\varphi(\omega_i)\phi(\omega_i) = \phi(\omega_i)$. If the state $\ket{\vec{\eta};\vec{\upsilon};\vec{\tau}}$ is measured, the probability $\rm{Pr}(\vec{\mu}|\vec{\eta};\vec{\upsilon};\vec{\tau})$ of outcome $\vec{\mu}$ is
\begin{equation}
  \mathrm{Pr}(\vec{\mu}|\vec{\eta};\vec{\upsilon};\vec{\tau}) = \bra{\vec{\eta};\vec{\upsilon};\vec{\tau}}M_{\vec{\mu}}\ket{\vec{\eta};\vec{\upsilon};\vec{\tau}}. \label{eq:measurementprobability}
\end{equation}
This completes our mathematical description of passive optical interferometry. 

\subsection{Permutation of photons and modes}
\label{sec:symmetricgroupaction}
In this subsection we employ the symmetric group to formally define the permutation of photons and modes. The action that permutes photons is also used to define a representation of the symmetric group. In the next subsection the rate matrix is shown to carry this representation.

The symmetric group $S_n$ is the group of all $n!$ permutations of $n$ objects. The state of $n$ photons~\eqref{eq:fockstate}, described by the product of $n$ bosonic creation operators, allows for the definition of the actions of the symmetric group $S_n$ on photonic states. Given an element $\sigma\in S_n$, the action $P_\sigma$ on an $n$-photon state is defined as
\begin{align}
  P_\sigma\ket{\vec{\eta};\vec{\upsilon};\vec{\tau}} &\defeq \frac{1}{\sqrt{\mathcal{N}}}\prod_i A_{\upsilon_{\sigma(i)}}^\dagger(\tau_i)\ket{0} \nonumber \\
&= \ket{\vec{\eta};P_\sigma\vec{\upsilon};\vec{\tau}},
\label{eq:permutationaction}
\end{align}
where $P_\sigma\vec{\upsilon}$ is a reordering of the entries of $\vec{\upsilon}$. There are $N=n!/\prod_i\eta_i!$ distinct permutations of $\vec{\upsilon}$. We arbitrarily order and collect these $N$ permutations in the set
\begin{equation}
  \Upsilon = \set{\vec{\bar{\upsilon}}^1,\dots,\vec{\bar{\upsilon}}^N}. \label{eq:photonoccupationsset}
\end{equation}
The elements of the symmetric group transform between the $N$ states~\eqref{eq:photonoccupationsset} with mode-occupation $\vec{\eta}$.

The action~\eqref{eq:permutationaction} naturally leads to an $N$-dimensional representation of $S_n$. For the permutations $\sigma\in S_n$ its representation $\Gamma(\sigma)$ has entries
\begin{equation}
  \Gamma_{ij}(\sigma) =
  \begin{cases}
    1 \quad\text{ if } P_{\sigma}\vec{\bar{\upsilon}}^i = \vec{\bar{\upsilon}}^j, \\
    0 \quad\text{ otherwise.}
  \end{cases} \label{eq:standardrep}
\end{equation}

We also define $Q$, the action of the symmetric group on photonic states that permutes modes. To describe all mode permutations of $m$-mode states we require the symmetric group $S_m$. Given an element $\sigma\in S_m$,
\begin{align}
  Q_\sigma\ket{\vec{\eta};\vec{\upsilon};\vec{\tau}} &\defeq \frac{1}{\sqrt{\mathcal{N}}}\prod_i A_{\sigma\left(\upsilon_i\right)}^\dagger(\tau_i)\ket{0} \nonumber \\
&= \ket{Q_\sigma\vec{\eta};Q_\sigma\vec{\upsilon}},
\label{eq:permutationmodeaction}
\end{align}
where $Q_\sigma\vec{\eta}$ reorders the entries of $\vec{\eta}$ and $Q_\sigma\vec{\upsilon}$ maps $\upsilon_i$ to $\sigma(\upsilon_i)$. 

The actions $P_\sigma$ and $Q_\sigma$ were previously mentioned when discussing the inputs states shown in Fig.~\ref{fig:examples}. We have the relationship
\begin{align*}
  P_{(23)}\ket{211;1123;\vec{\tau}} &= \ket{211;P_{(23)}1123;\vec{\tau}} \nonumber \\ &=\ket{211;1213;\vec{\tau}},
\end{align*}
where $\sigma=(23)\in S_4$ because it is a system of four photons. We also have the relationship
\begin{align*}
  Q_{(12)}\ket{211;1123;\vec{\tau}} &= \ket{Q_{(12)}211;Q_{(12)}1123;\vec{\tau}} \nonumber \\ &= \ket{121;2213;\vec{\tau}},
\end{align*}
where $\sigma'=(12)\in S_3$ because the example in Fig.~\ref{fig:examples} is a system of three spatial modes. 

\subsection{Coincidence rates for arbitrary inputs}
\label{sec:ratederivation}
We now have all the necessary ingredients to prove that coincidence rates in passive interferometry are of the form~\eqref{eq:rateform}. This is the content of the following proposition.
\begin{theorem}
  \label{th:arbitraryrates}
  Given an $m$-mode $n$-photon state $\ket{\vec{\eta},\vec{\upsilon};\vec{\tau}}$ and an interferometer described by the matrix $U$, then for the output event $\vec{\mu}$, the coincidence rate is
  \begin{align}
    C(\vec{\tau}) &\defeq \bra{\vec{\eta},\vec{\upsilon};\vec{\tau}}U^\dagger M_{\vec{\mu}} U \ket{\vec{\eta},\vec{\upsilon};\vec{\tau}} \nonumber\\&= \vec{u}^\dagger R(\vec{\tau})\vec{u}. \label{eq:propositionrate}
  \end{align}
\end{theorem}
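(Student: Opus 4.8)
The goal is to show that the coincidence rate $C(\vec{\tau}) = \bra{\vec{\eta},\vec{\upsilon};\vec{\tau}}U^\dagger M_{\vec{\mu}} U \ket{\vec{\eta},\vec{\upsilon};\vec{\tau}}$ factors as $\vec{u}^\dagger R(\vec{\tau})\vec{u}$, with $\vec{u}$ depending only on $U$ and $R$ depending only on the arrival times. My strategy is to expand everything in creation/annihilation operators, push the interferometer action $U$ through the input state using Eq.~\eqref{eq:interferometeraction}, and then collect terms so that the $U$-dependence and $\vec{\tau}$-dependence separate into two factors that are contracted through an index running over the $N$ output photon-occupation strings in $\Upsilon$~\eqref{eq:photonoccupationsset}.

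First I would compute the normalization $\mathcal{N}$ of the state~\eqref{eq:fockstate}. Using the commutation relations~\eqref{eq:commutationrelations} and the definition~\eqref{eq:bigadefinition} of $A_i^\dagger(\tau)$, I would evaluate $\braket{\vec{\eta};\vec{\upsilon};\vec{\tau}}{\vec{\eta};\vec{\upsilon};\vec{\tau}}$ by Wick-contracting the $n$ creation against the $n$ annihilation operators; the nonzero contractions are indexed by permutations that preserve the mode assignments in $\vec{\upsilon}$, and with the broadband-detector assumption $\varphi(\omega_i)\phi(\omega_i)=\phi(\omega_i)$ this yields $\mathcal{N} = \prod_i \eta_i!$ together with the temporal overlap factors. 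Next I would act with $U$ on the state, replacing each input operator $A^\dagger_{\upsilon_i}(\tau_i)$ by $\sum_k U_{\upsilon_i k} A^\dagger_{k}(\tau_i)$, so that $U\ket{\vec{\eta};\vec{\upsilon};\vec{\tau}}$ becomes a sum over output-mode strings with coefficients that are products of $U$-entries. Then I would insert $M_{\vec{\mu}}$~\eqref{eq:measurementoperator} and evaluate $\bra{0}a_{\xi_1}(\omega_1)\cdots a_{\xi_n}(\omega_n)\,U\ket{\vec{\eta};\vec{\upsilon};\vec{\tau}}$: the detector projects onto the output occupation $\vec{\xi}$ fixed by $\vec{\mu}$, and only those terms survive in which the output operators match $\vec{\xi}$ up to a permutation.

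The key observation is that after the Wick contractions the amplitude $\bra{\vec{\mu}}U\ket{\ldots}$ organizes into a sum over permutations, and grouping these by the $N$ distinct reorderings $\bar{\vec{\upsilon}}^1,\dots,\bar{\vec{\upsilon}}^N$ of $\vec{\upsilon}$ lets me \emph{define} the components of $\vec{u}$ as the corresponding products of $U$-entries (these are exactly the transition amplitudes described after Eq.~\eqref{eq:naturalbasis}), while the residual integrals over the $\phi(\omega_i)\e^{-\iota\omega\tau_i}$ factors, which carry all the $\vec{\tau}$-dependence and none of the $U$-dependence, assemble into the matrix entries $R_{ij}(\vec{\tau})$. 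Writing $\mathrm{Pr}(\vec{\mu}\mid\ldots)=\bra{U\vec{\eta};\vec{\upsilon};\vec{\tau}}M_{\vec{\mu}}\ket{U\vec{\eta};\vec{\upsilon};\vec{\tau}}$ as a product of the bra-amplitude vector, the matrix of temporal overlaps, and the ket-amplitude vector then gives precisely $\vec{u}^\dagger R(\vec{\tau})\vec{u}$. I would close by noting that $R$ is Hermitian (it is a Gram-type matrix of temporal overlaps) and, as claimed earlier, manifestly independent of $\vec{\eta}$ and $\vec{\upsilon}$, since those enter only through $\vec{u}$.

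The main obstacle I anticipate is bookkeeping the permutation sums cleanly: the contraction of $n$ creation with $n$ annihilation operators naturally produces a double sum over permutations, and the work lies in showing that this double sum factors through the intermediate index set $\Upsilon$ rather than remaining entangled. Concretely, one must verify that the permutation acting on the output operators can be decomposed into a part that permutes \emph{among} repeated output modes (absorbed into the combinatorial prefactors $1/\mu_i!$ in $M_{\vec{\mu}}$ and into $\mathcal{N}$) and a part that relabels \emph{distinct} output strings (the index $i,j$ of $R_{ij}$). Handling the repeated-mode multiplicities correctly—so that the prefactors in~\eqref{eq:measurementoperator} and~\eqref{eq:fockstate} conspire to leave a clean Hermitian $R$—is the delicate step; everything else is a direct Gaussian/Wick computation.
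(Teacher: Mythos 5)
Your overall route---expand in creation operators, push $U$ through the input state via \eqref{eq:interferometeraction}, project with $M_{\vec{\mu}}$, and collect the surviving terms into a bilinear form whose $U$-dependent factors define $\vec{u}$ and whose temporal-overlap integrals define $R$---is exactly the paper's proof, including the final step of splitting the permutation double-sum into a stabilizer part and a coset part (the paper's sets $C_k$). The one point you must fix is the index set of the contraction. You say the grouping is by the $N$ distinct reorderings $\vec{\bar{\upsilon}}^1,\dots,\vec{\bar{\upsilon}}^N$ of $\vec{\upsilon}$ and cite \eqref{eq:photonoccupationsset}, which is the set of permutations of the \emph{input} string with $N=n!/\prod_i\eta_i!$. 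The correct index set is the $N=n!/\prod_i\mu_i!$ distinct permutations $\vec{\bar{\xi}}^1,\dots,\vec{\bar{\xi}}^N$ of the \emph{output} string $\vec{\xi}$ determined by $\vec{\mu}$: after the interferometer acts, the $k$-th photon keeps its time label $\tau_k$ and its fixed input mode $\upsilon_k$, and what varies over the surviving terms is the output-mode assignment, giving $u_i=\prod_k U_{\upsilon_k\bar{\xi}^i_k}$ and $R_{ij}(\vec{\tau})=\bra{\vec{\mu};\vec{\bar{\xi}}^i;\vec{\tau}}M_{\vec{\mu}}\ket{\vec{\mu};\vec{\bar{\xi}}^j;\vec{\tau}}$. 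This is not cosmetic: it is what gives $R$ the advertised dimension (six rather than twelve in the example of Sec.~\ref{sec:example}) and makes $R$ independent of the input configuration, which underpins the covariance argument of Sec.~\ref{sec:permutedphotons}. Your remaining remarks---Hermiticity of $R$ as a Gram-type matrix, absorption of permutations among repeated output modes into the $1/\prod_i\mu_i!$ prefactor and of the remaining cosets into the indices of $R$---match the paper's treatment, though note that $\mathcal{N}$ is not simply $\prod_i\eta_i!$ but the full temporal-overlap sum over the stabilizer of $\vec{\upsilon}$, and so itself depends on $\vec{\tau}$ when photons sharing a mode have distinct arrival times.
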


\begin{proof}
  The action of the interferometer~\eqref{eq:interferometeraction} on the state~\eqref{eq:fockstate} results in the state
\begin{equation}
  U\ket{\vec{\eta};\vec{\upsilon};\vec{\tau}} = \sum_{\vec{\upsilon'}\in \set{1,\dots,m}^n} \prod_{k=1}^n U_{\upsilon_k \upsilon_k'} \ket{\vec{\eta}';\vec{\upsilon}';\vec{\tau}},
\end{equation}
the superposition of all possible states with $n$ photons. Here we have suppressed the normalization factor from Eq.~\eqref{eq:fockstate} but calculate it shortly. In the sum above, the only states $\ket{\vec{\eta}';\vec{\upsilon}';\vec{\tau}}$ that have a non-zero contribution to the expectation value
\begin{equation}
 \bra{\vec{\eta},\vec{\upsilon};\vec{\tau}}U^\dagger M_{\vec{\mu}} U \ket{\vec{\eta},\vec{\upsilon};\vec{\tau}}
\end{equation}
are those for which $\vec{\eta}'=\vec{\mu}$. The states for which this condition is true are collected in basis
\begin{equation}
  B = \set{\ket{\vec{\mu};P_\sigma\vec{\xi};\vec{\tau}}: \sigma \in S_n}.
\end{equation}
This basis set has $N = n!/\prod_{i=1}^m \mu_i!$ elements, labeled by the $N$ permutations
\begin{equation}
  \Upsilon = \set{\vec{\bar{\xi}}^1,\dots,\vec{\bar{\xi}}^N}
\end{equation}
of $\vec{\xi}$. Therefore, the coincidence rate
\begin{equation}
  C(\vec{\tau}) = \sum_{i,j=1}^N \prod_{k=1}^n U_{\upsilon_k \bar{\xi}^i_k}^* \bra{\vec{\mu};\vec{\bar{\xi}^i};\vec{\tau}}M_{\vec{\mu}} \prod_{k'=1}^n U_{\upsilon_{k'} \bar{\xi}^j_{k'}}\ket{\vec{\mu};\vec{\bar{\xi}^j};\vec{\tau}}.
\end{equation}
Such a sum of products can equivalently be written as the matrix product $C(\vec{\tau}) = \vec{u}^\dagger R(\vec{\tau})\vec{u}$, where the entries of $\vec{u}$ are
  \begin{equation}
    u_k \defeq \prod_{i=1}^n U_{\upsilon_i\bar{\xi}^k_{i}}, \label{eq:interferometervectorentries}
  \end{equation}
and
\begin{equation}
  R_{ij}(\vec{\tau}) = \bra{\vec{\mu};\vec{\bar{\xi}^i};\vec{\tau}}M_{\vec{\mu}}\ket{\vec{\mu};\vec{\bar{\xi}^j};\vec{\tau}}.
\end{equation}
Employing definition~\eqref{eq:measurementoperator} of $M_{\vec{\mu}}$ and definition~\eqref{eq:fockstate} of photonic states,
\begin{equation}
  R_{ij}(\vec{\tau})=\int \mathrm{d}\omega_1\cdots\mathrm{d}\omega_n r_i r_j^\dagger, \label{eq:ratematrixsimplified}
\end{equation}
where
\begin{equation}
  r_k = \bra{0}\prod_{i=1}^n A_{\bar{\xi}^k_i}(\tau_i) \prod_{j=1}^n a_{\xi_j}^\dagger(\omega_j)\ket{0}. \label{eq:unsimplifiedr}
\end{equation}

The elements of $\vec{r}$ can be simplified as follows. First note that by using the definition of the creation operators~\eqref{eq:bigadefinition} an integral over the frequencies is obtained. This integral can be calculated by defining for element $r_k$ the set
\begin{equation}
  C_k \defeq \set{\sigma\in S_n : P_\sigma \vec{\bar{\xi}}^k = \vec{\xi}}. \label{eq:cosetdef}
\end{equation}
Assuming $\vec{\xi}^1 \defeq \vec{\xi}$, the set $C_1$ stabilizes $\vec{\xi}$ and is therefore a subgroup of $S_n$. The sets $\set{C_k}$ are the right cosets of $S_n$ with respect to $C_1$. Then by using the commutation relations~\eqref{eq:commutationrelations} the integral over the frequencies reduces to the sum
\begin{equation}
  r_k = \sum_{\sigma\in C_k} \phi(\omega_1)\cdots\phi(\omega_n) \e^{\iota\omega_1\tau_{\sigma(1)}}\cdots \e^{\iota\omega_n\tau_{\sigma(n)}}.
\end{equation}
Using this the entries of the rate matrix turn into
\begin{widetext}
\begin{align}
  R_{ij}(\vec{\tau}) = \frac{1}{\mathcal{N}}\frac{1}{\mu_1!,\cdots,\mu_n!}\int & \mathrm{d}\omega_1\cdots\mathrm{d}\omega_n |\phi(\omega_1)|^2\cdots |\phi(\omega_n)|^2\nonumber \\ &\times\left(\sum_{\sigma'\in C_i} \e^{\iota\omega_1\tau_{\sigma'(1)}}\cdots \e^{\iota\omega_n\tau_{\sigma'(n)}}  \right)\left(\sum_{\tilde{\sigma}\in C_j} \e^{-\iota\omega_1\tau_{\tilde{\sigma}(1)}}\cdots \e^{-\iota\omega_n\tau_{\tilde{\sigma}(n)}}\right), \label{eq:ratematrixentries}
\end{align}
where we have reintroduced the normalization factors from Eq.~\eqref{eq:fockstate} and Eq.~\eqref{eq:measurementoperator}. The normalization factor $\mathcal{N}$ is 
\begin{equation}
  \mathcal{N} = \left|\langle \vec{\eta},\vec{\upsilon},\vec{\tau} | \vec{\eta},\vec{\upsilon},\vec{\tau} \rangle\right|^2 = \left|\bra{0}\prod_{i=1}^n A_{\upsilon_i}(\tau_i) \prod_{j=1}^n A_{\upsilon_j}^\dagger(\tau_j)\ket{0}\right|^2.
\end{equation}
This expression is similar to Eq.~\eqref{eq:unsimplifiedr} and can be simplified in the same way. Let $C'$ be subset of elements of $S_n$ that stabilize $\vec{\upsilon}$. Then,
\begin{equation}
  \mathcal{N} = \left|\int \mathrm{d}\omega_1\cdots\mathrm{d}\omega_n |\phi(\omega_1)|^2\cdots |\phi(\omega_n)|^2 \e^{\iota\omega_1\tau_1}\cdots \e^{\iota\omega_n\tau_n} \sum_{\sigma \in C'} \left[\e^{-\iota\omega_1\tau_{\sigma(1)}}\cdots \e^{-\iota\omega_n\tau_{\sigma(n)}}\right]\right|^2. 
\end{equation}
Hence, the coincidence rate~\eqref{eq:propositionrate} can be calculated from Eqs.~\eqref{eq:interferometervectorentries} and~\eqref{eq:ratematrixentries}. \qedhere
\end{widetext}
\end{proof}
Proposition~\ref{th:arbitraryrates} provides convenient expressions to calculate the coincidence rate. The entries of the interferometer vector can be calculated using Eq.~\eqref{eq:interferometervectorentries}, the entries of rate matrix can be calculated using Eq.~\eqref{eq:ratematrixentries}, and the total coincidence rate can then be calculated using Eq.~\eqref{eq:rateform}. The expressions in Proposition~\ref{th:arbitraryrates} are amenable to execution by computer code. \textsc{Mathematica}$^{\circledR}$ code for these expressions, which was also used to do all calculations in this work, can be found on GitHub~\cite{sakhtekiran}. Finally, note that the rate expression that is calculated is not normalized. This is not a problem for predicting the results of photonic interferometry experiments because, due to photon losses, only the relative rates between different outcomes is known and not the absolute value of any given outcome~\cite{Spring2013,Broome2013,Tillmann2013,Spagnolo2014,Bentivegna2015,Tillmann2015}.

The factorization of the rate into a matrix product follows from the linearity of the interferometer's action~\eqref{eq:interferometeraction} on photonic states. The dependence of the entries of the rate matrix on permutations arises due to swapping present in the commutation relations~\eqref{eq:commutationrelations}. The specific form of the wavefunction that we have assumed in Eq.~\eqref{eq:fockstate} is not important, as the same factorization of the rate happens for photons distinguished by (one or more) other non-path degrees of freedom. 

The decomposition of the rate as a matrix product~\eqref{eq:rateform} provides computational shortcuts for calculating coincidence rates for repeated simulations of experiments in two cases. If the interferometer parameters are fixed but the photon delays vary across simulations, then the immanants do not change and have to be computed only once whereas the coefficients do change between computational runs. On the other hand, if the photon delays are fixed but the interferometer parameters vary across simulations, then only the immanants have to be computed anew in each computational run.

\subsection{Coincidence rates in terms of immanants}
\label{sec:ratesimmanants}
In this section, we show that the coincidence rates can be expressed in terms of immanants and provide a procedure to do so. First, we show that the rate matrix carries a representation of the symmetric group, and hence can be block-diagonalized using standard methods of the symmetric group representation theory~\cite{Chen1989}. Second, we provide a procedure to convert the entries of the rotated interferometer vector into sums of immanants. Then, Eq.~\eqref{eq:rotatedrate} can be used to calculate the coincidence rate.

We prove that the rate matrix~\eqref{eq:ratematrixentries} can be block-diagonalized with the aid of the following proposition.

~
\begin{theorem}
  \label{th:ratematrixrep}
  The rate matrix carries the representation~\eqref{eq:standardrep}; i.e.\ the rate matrix may be expressed as
\begin{equation}
    R(\vec{\tau}) = \sum_{\sigma\in S_n} \Delta_\sigma(\vec{\tau})\Gamma(\sigma). \label{eq:ratematrixsum}
  \end{equation}
\end{theorem}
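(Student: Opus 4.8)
The plan is to prove~\eqref{eq:ratematrixsum} entry by entry. Because each $\Gamma(\sigma)$ of~\eqref{eq:standardrep} is a permutation matrix, $\Gamma_{ij}(\sigma)\in\set{0,1}$, so~\eqref{eq:ratematrixsum} is equivalent to the statement that
\[ R_{ij}(\vec{\tau}) = \sum_{\sigma\,:\,\Gamma_{ij}(\sigma)=1}\Delta_\sigma(\vec{\tau}) \]
for a \emph{single} function $\Delta\colon S_n\to\mathbb{C}$ that does not depend on the indices $i,j$. I would therefore begin from the explicit entries~\eqref{eq:ratematrixentries} produced in the proof of Proposition~\ref{th:arbitraryrates}, show that each summand there is governed by one group element, and then collapse the coset double sum so that all dependence on $(i,j)$ is carried by the index set $\set{\sigma:\Gamma_{ij}(\sigma)=1}$ alone.

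First I would carry out the frequency integrals in~\eqref{eq:ratematrixentries}. Since the integrand factorizes across the $n$ frequencies, expanding the two coset sums yields
\[ R_{ij}(\vec{\tau}) \propto \sum_{\sigma'\in C_i}\sum_{\tilde{\sigma}\in C_j}\prod_{k=1}^n g\!\left(\tau_{\sigma'(k)}-\tau_{\tilde{\sigma}(k)}\right), \]
where $g(t)\defeq\int\mathrm{d}\omega\,|\phi(\omega)|^2\e^{\iota\omega t}$ and the constant collects the normalizations appearing in~\eqref{eq:ratematrixentries}. The decisive observation is that this product depends on $\sigma',\tilde{\sigma}$ only through $\sigma'\tilde{\sigma}^{-1}$: reindexing the product by $l=\tilde{\sigma}(k)$ gives $\prod_k g(\tau_{\sigma'(k)}-\tau_{\tilde{\sigma}(k)})=W_{\vec{\tau}}(\sigma'\tilde{\sigma}^{-1})$ with $W_{\vec{\tau}}(\pi)\defeq\prod_{k=1}^n g(\tau_{\pi(k)}-\tau_k)$. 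This right-invariance is the structural heart of the argument, and it is precisely where the swapping encoded in the commutation relations~\eqref{eq:commutationrelations} reappears as an action of $S_n$.

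Next I would use that the sets $C_i$ of~\eqref{eq:cosetdef} are cosets of the stabilizer $H\defeq C_1$. Writing $C_i=\sigma_i H$ and $C_j=\sigma_j H$ and substituting $\sigma'=\sigma_i h_1$, $\tilde{\sigma}=\sigma_j h_2$, the double sum collapses to $|H|\sum_{h\in H}W_{\vec{\tau}}(\sigma_i h\sigma_j^{-1})$, because $h_1 h_2^{-1}$ sweeps $H$ with uniform multiplicity $|H|$. It then remains to recognize the index set: from the definition~\eqref{eq:permutationaction} of $P_\sigma$ one checks that $\set{\sigma_i h\sigma_j^{-1}:h\in H}=\set{\sigma:P_\sigma\vec{\bar{\xi}}^{\,i}=\vec{\bar{\xi}}^{\,j}}=\set{\sigma:\Gamma_{ij}(\sigma)=1}$, and the map $h\mapsto\sigma_i h\sigma_j^{-1}$ is a bijection onto it. Hence $R_{ij}(\vec{\tau})\propto|H|\sum_{\sigma:\Gamma_{ij}(\sigma)=1}W_{\vec{\tau}}(\sigma)$, and defining $\Delta_\sigma(\vec{\tau})$ to be $W_{\vec{\tau}}(\sigma)$ times the index-independent constant absorbs all of this, delivering~\eqref{eq:ratematrixsum}.

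The one step that genuinely requires care---and which I expect to be the main obstacle---is the bookkeeping that guarantees $\Delta_\sigma$ is \emph{well defined}, i.e.\ manifestly free of any residual $i,j$ dependence. This rests on fixing the composition convention for $P_\sigma$ (it satisfies $P_\alpha P_\beta=P_{\beta\alpha}$ on occupation strings), on determining whether each $C_i$ is a left or right coset of $H$, and on tracking the resulting multiplicity $|H|$ so that it is absorbed into a \emph{constant} rather than into an $(i,j)$-dependent weight; getting the coset side wrong would spoil the identification of $\set{\sigma:\Gamma_{ij}(\sigma)=1}$ with the summation range of $W_{\vec{\tau}}$. Once these conventions are pinned down the identification is exact and the representation property is immediate. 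As a consistency check, $g(-t)=g(t)^{*}$ forces $\Delta_{\sigma^{-1}}=\Delta_\sigma^{*}$, which reproduces the Hermiticity of $R(\vec{\tau})$ noted earlier.
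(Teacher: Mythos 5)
Your proposal is correct and follows essentially the same route as the paper's proof: starting from the coset double sum in Eq.~\eqref{eq:ratematrixentries}, showing each summand depends on $\sigma'$ and $\tilde{\sigma}$ only through their quotient, and identifying the resulting index set with $\set{\sigma:\Gamma_{ij}(\sigma)=1}$ so that the coefficient is independent of $i,j$. The only difference is presentational: you make the coset collapse, the $|H|$-fold multiplicity, and the composition convention for $P_\sigma$ explicit (and add a Hermiticity consistency check), whereas the paper handles the same identification by relabeling the frequency variables and reading off $\Delta_{\tilde{\sigma}^{-1}\sigma'}$ directly.
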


\emph{Proof:} First note that Eqs.~\eqref{eq:ratematrixsum} and~\eqref{eq:standardrep} imply that
\begin{equation}
  R_{ij}(\vec{\tau}) = \sum_{\sigma \in S_n} 
  \begin{cases}
    \Delta_\sigma(\vec{\tau}) &\text{ if } P_{\sigma}\vec{\bar{\xi}}^i = \vec{\bar{\xi}}^j \\
    0 &\text{ otherwise}
  \end{cases}. \label{eq:ratematrixrepresentationbreakdown}
\end{equation}
We show that Eq.~\eqref{eq:ratematrixentries} is equivalent to this expression. The double sum in Eq.~\eqref{eq:ratematrixentries} has entries
\begin{align}
  \int &\mathrm{d}\omega_1\cdots\mathrm{d}\omega_n |\phi(\omega_1)|^2\cdots |\phi(\omega_n)|^2 \nonumber \\&\times \e^{\iota\omega_1(\tau_{\sigma'(1)}-\tau_{\tilde{\sigma}(1)})}\cdots \e^{\iota\omega_n(\tau_{\sigma'(n)}-\tau_{\tilde{\sigma}(n)})} \\
  = \int &\mathrm{d}\omega_1\cdots\mathrm{d}\omega_n |\phi(\omega_1)|^2\cdots |\phi(\omega_n)|^2 \nonumber \\&\times \e^{\iota\omega_1(\tau_{\tilde{\sigma}^{-1}\sigma'(1)}-\tau_{1})}\cdots \e^{\iota\omega_n(\tau_{\tilde{\sigma}^{-1}\sigma'(n)}-\tau_{n})}, \label{eq:doublesumentries}
\end{align}
where the equality is obtained by a suitable relabeling of the $\set{\omega_i}$. Note that from the definitions of $\sigma' \in C_i$ and $\tilde{\sigma} \in C_j$ it follows that $(P_{\tilde{\sigma}})^{-1}P_{\sigma'}\vec{\bar{\xi}}^i = \vec{\bar{\xi}}^j$. Hence, if we set $\sigma = \tilde{\sigma}^{-1}\sigma'$ in Eq.~\eqref{eq:doublesumentries} then Eq.~\eqref{eq:ratematrixrepresentationbreakdown} is satisfied. Explicitly, by comparing Eq.~\eqref{eq:ratematrixentries} and Eq.~\eqref{eq:ratematrixrepresentationbreakdown}, it is clear that 
\begin{widetext}
\begin{equation}
  \Delta_{\tilde{\sigma}^{-1}\sigma'}(\vec{\tau}) = \int \mathrm{d}\omega_1\cdots\mathrm{d}\omega_n |\phi(\omega_1)|^2\cdots |\phi(\omega_n)|^2\left( \e^{\iota\omega_1\tau_{\sigma'(1)}}\cdots \e^{\iota\omega_n\tau_{\sigma'(n)}}  \right)\left( \e^{-\iota\omega_1\tau_{\tilde{\sigma}(1)}}\cdots \e^{-\iota\omega_n\tau_{\tilde{\sigma}(n)}}\right). \qed
\end{equation}
\end{widetext}

To block-diagonalize the rate matrix requires the calculations of the basis $V$ referred to in Eq.~\eqref{eq:rotatedrate}. This basis can be obtained using a variety of methods, one of which is provided in Chap.~$4$ of Ref.~\cite{Chen1989}. An example application of this method is presented in Appendix~\ref{sec:appendixexample} and an implementation can be found in our reference code~\cite{sakhtekiran}.

Now, we provide a procedure to express the entries of the rotated interferometer vector $V\vec{u}$ as sums of immanants of the scattering matrix. As discussed in Sec.~\ref{sec:mainexamplerate}, the standard representation $\Gamma$ and the space $H^{\vec{\mu}}$ can be block-diagonalized. Young tableaux, obtained by filling Young diagrams with positive integers, provide information about the size and multiplicity of the blocks that occur in these block-diagonalization. The standard representation $\Gamma$ block-diagonalizes as
\begin{equation}
  \Gamma = \bigoplus_{\lambda \vdash n} p^\lambda \lambda,
\end{equation}
where the sum is over all partitions of $n$ and $p^\lambda$ is the multiplicity of the irrep $\lambda$. A standard Young tableau is an $n$-box Young diagram filled with $\{1,\dots,n\}$ such that numbers strictly increase across rows and columns. The dimension $d$ of irrep $\lambda$ is the number of standard Young tableaux of $\lambda$. A semi-standard Young tableau is an $n$-box Young diagram filled by $n$, possibly repeating, positive integers such that the numbers strictly increase across columns and weakly across rows. The multiplicity $p^\lambda$ is the number of semi-standard Young tableaux formed using Young diagram $\lambda$ and the string $\vec{\xi}$. This information about the decomposition of $\Gamma$ can be useful for determining the size and number of blocks in the decomposition of the rate matrix~\eqref{eq:ratematrixsum}.

The space $H^{\vec{\mu}}$ block-diagonalizes as
\begin{equation}
  H^{\vec{\mu}} = \bigoplus_{\lambda \vdash n} p^\lambda H^\lambda, \label{eq:subspacedecomposition}
\end{equation}
where the sum is over all partitions of $n$ and $p^\lambda$ is the multiplicity of the the subspace $H^\lambda$, which is determined as above. The subspace $H^\lambda$ is spanned by up to $d$ distinct immanants of $\lambda$ type, which will happen if there is no multiplicity of photons in the input or output modes. If there are multiple photons in any of the input or output modes, then the number of distinct immanants is reduced, because the number of distinct states that can be obtained by permuting the photons is reduced. The rule for determining the actual number of immanants of certain type $\lambda$ is rather simple. Count the number of semi-standard Young tableaux of $\lambda$, once using input photon-occupation string $\vec{\upsilon}$ and once using the output photon-occupation string $\vec{\mu}$: the minimum of these two counts is the number of distinct immanants of the type $\lambda$. Let this set of distinct immanants be $I^\lambda$. 

Now, we discuss how to express the entries of the interferometer vector in terms of immanants. Recall, that $H^{\vec{\mu}}$ reduces in the block-diagonal basis $V$. Therefore, $V\vec{u}$ is also reduced and has blocks corresponding to the irreps of $S_n$. This means that the entries of $V\vec{u}$ can be linearly transformed to express them in terms of immanants. Explicitly, suppose the element $(V\vec{u})_i$ is associated with the irrep $\lambda$. Then, the element $(V\vec{u})_i$ is a vector in the space $H^\lambda$, and can be expressed in terms of the elements of the set $I^\lambda$. In this way, $V\vec{u}$ and the coincidence rate~\eqref{eq:rotatedrate} can be expressed in terms of immanants.

We finish this section with a short discussion on how to calculate coincidence rates for input states related by the permutation of either photons or modes. The technical results have already been proved in Secs.~\ref{sec:permutedphotons} and \ref{sec:permutedmodes}. If for some input state the rate matrix and interferometer vector have been calculated, then for any other input related to the first by a permutation of photons the coincidence rate can be calculated using Eq.~\eqref{eq:permutedphotons}. If the new input is related by a permutation of modes, the the rate matrix remains the same but the interferometer vector has to be calculated anew with the new scattering matrix. This completes our discussion on how to calculate coincidence rates for any input and output configuration of photons.

\section{Conclusion}
\label{ref:conclusion}
We have developed a theory of passive optical interferometry that relates the coincidence rates at the output of an interferometer with the permutational symmetries of the input photons. The permutational symmetry of the input photons can be controlled by varying their distinguishability, which in our case is done by tuning the time-of-arrival of the photons at the interferometer. Our results are obtained by exploiting the representation theory of the symmetric group.

The coincidence rates at the output of the interferometer are expressed in terms of the immanants of the scattering matrix, where the relative weights of the immanants depends on the distinguishability of the input photons. If any exchange symmetries of the input photons are forbidden because of the multiplicity of photons in input ports then immanants with those exchange symmetries do not appear in the coincidence rate expressions. For inputs related by a permutation their rates are correspondingly related by a simple linear transformation.

As a simple example of the applicability of our formalism we presented a study of the HOM experiment with photon sources that sometimes output two photons. Though for this system direct computation of the rates is straightforward, our theory can be used to infer without any computation the qualitative changes to the coincidence rates due to multiple input photons. Such qualitative analysis can also be applied to larger interference experiments.

\section*{Acknowledgments}
We thank M. Broome for helpful discussions and M. Frood for his contribution to the early stages of this manuscript. B.C.S. acknowledges financial support from Alberta Innovates and NSERC. H.de G. and D.J.S. acknowledge financial support
from NSERC and Lakehead University. A.K. acknowledges financial support from NSERC CREATE CryptoWorks21.

\bibliographystyle{unsrt}
\bibliography{photonicsymmetries}

\appendix
\begin{widetext}
\allowdisplaybreaks
\section{Four photons in two modes}
\label{sec:appendixexample}
In this appendix we fully work out the details of the examples presented in Sec.~\ref{sec:example}. We provide explicit expressions for the interferometer vector and the rate matrix. We use class operator methods to find the basis $V$ in which the rate matrix is block-diagonal and the entries of the interferometer matrix are the immanants of the scattering matrix.

The input state,
\begin{equation}
  \ket{211;1123} = A_1^\dagger(\tau_1)A_1^\dagger(\tau_2)A_2^\dagger(\tau_3)A_3^\dagger(\tau_4)\ket{0}, 
\end{equation}
transforms under the action of the interferometer to
\begin{align}
  U\ket{211;1123} =& \left(U_{11}A_1^\dagger(\tau_1)+U_{12}A_2^\dagger(\tau_1)+U_{13}A_3^\dagger(\tau_1)\right)\left(U_{11}A_1^\dagger(\tau_2)+U_{12}A_2^\dagger(\tau_2)+U_{13}A_3^\dagger(\tau_2)\right)\nonumber \\&
\times\left(U_{21}A_1^\dagger(\tau_3)+U_{22}A_2^\dagger(\tau_3)+U_{23}A_3^\dagger(\tau_3)\right)\left(U_{31}A_1^\dagger(\tau_4)+U_{32}A_2^\dagger(\tau_4)+U_{33}A_3^\dagger(\tau_4)\right)\ket{0}. \label{eq:scattered211}
\end{align}
We have ignored the normalization factor in these expressions for simplicity. We are interested in the output event $\vec{\mu}=022$. Of the $81$ terms in the expansion of~\eqref{eq:scattered211}, only six correspond to two photons in each of the second and third modes, coinciding with the six permutations of $\vec{\xi}=2233$. Only these six terms give a non-zero contribution to the the rate~\eqref{eq:measurementprobability}. The amplitude of these six terms is $w^\dagger\ket{0}\vec{u}$, where the interferometer vector
\begin{equation}
  \vec{u} =
  \begin{pmatrix}
    U_{12}U_{12}U_{23}U_{33} \\
    U_{12}U_{13}U_{22}U_{33} \\
    U_{12}U_{13}U_{23}U_{32} \\
    U_{13}U_{12}U_{22}U_{33} \\
    U_{13}U_{12}U_{23}U_{32} \\
    U_{13}U_{13}U_{22}U_{32} \\
  \end{pmatrix}, \label{eq:interferometervectorexample}
\end{equation}
and vector
\begin{equation}
  \vec{w} = \begin{pmatrix}
A_2(\tau_1)A_2(\tau_2)A_3(\tau_3)A_3(\tau_4) \\
A_2(\tau_1)A_3(\tau_2)A_2(\tau_3)A_3(\tau_4)\\
A_2(\tau_1)A_3(\tau_2)A_3(\tau_3)A_2(\tau_4)\\
A_3(\tau_1)A_2(\tau_2)A_2(\tau_3)A_3(\tau_4)\\
A_3(\tau_1)A_2(\tau_2)A_3(\tau_3)A_2(\tau_4)\\
A_3(\tau_1)A_3(\tau_2)A_2(\tau_3)A_2(\tau_4)\\
  \end{pmatrix}.
\end{equation}
Employing the measurement operator $M_{022}$ for detecting two photons in each mode the rate
\begin{align}
  C(\vec{\tau}) &= \bra{221;1123}U^\dagger M_{022}U\ket{221;1123} \nonumber\\
  &= \vec{u}^\dagger\bra{0}\vec{w}M_{022} \vec{w}^\dagger\ket{0} \vec{u} 
\end{align}
The $6\times 6$ matrix
\begin{equation}
  R(\vec{\tau}) \defeq \bra{0}\vec{w} M_{022} \vec{w}^\dagger \ket{0} \label{eq:ratematrix22}
\end{equation}
is the rate matrix. This expression is equivalent to the intermediate expression~\eqref{eq:ratematrixsimplified} of the rate matrix in Proposition~\ref{th:arbitraryrates}. The rate matrix can be calculated from this expression using tedious calculations. We use the simplified expression~\eqref{eq:ratematrixentries} to calculate the rate matrix. First we note that the string $\vec{\xi}=2233$ has permutations
\begin{equation}
  \Upsilon = \set{2233,2323,2332,3223,3232,3322}. \label{eq:exampleupsilon}
\end{equation}
Using this, we can find the cosets of $\vec{\xi}=2233$. The string $\vec{\xi}=2233$ is stabilized by the $S_4$ subgroup $C_1=\{\mathrm{e},(12),(34),(12,34)\}$: for instance $P_{12,34}2233=2233$. This subgroup has right cosets
\begin{align}
  C_1 &= \{\mathrm{e},(12),(34),(12,34)\}, \nonumber\\
  C_2 &= \{(23),(132),(234),(1342)\}, \nonumber\\
  C_3 &= \{(24),(142),(243),(1432)\}, \nonumber\\
  C_4 &= \{(13),(123),(134),(1234)\}, \nonumber\\
  C_5 &= \{(14),(124),(143),(1243)\}, \nonumber\\
  C_6 &= \{(1324),(1423),(13,24),(14,23)\}.
\end{align}
By definition~\eqref{eq:cosetdef}, the elements of coset $C_k$ maps $\vec{\bar{\xi}}^k$ to $\vec{\xi}=\vec{\bar{\xi}}^1$: for instance $(23)\in C_2$ acts as $P_{23}2323=2233$. 

To calculate an explicit expression for the rate matrix we assume that all photons have an identical Gaussian spectrum, 
\begin{equation}
  \phi_i(\omega) = \frac{1}{\left(\pi\sigma^2\right)^{1/4}}\exp\left(-\frac{(\omega-\omega_0)^2}{2\sigma^2}\right) \quad\forall i\in\set{1,\dots,n}.
\end{equation}
For modeling a real experiment, the rates can be calculated using experimentally measured spectrums of the input photons. Using this Gaussian spectrum and employing the expression~\eqref{eq:ratematrixentries} we find that the rate matrix is
\begin{equation}
  R(\vec{\tau}) = 
  \begin{pmatrix}
    \tilde R_1 & \tilde R_2 & \tilde R_3
  \end{pmatrix},
\end{equation}
where
\begin{align}
  \tilde{R}_1 &=
  \begin{pmatrix}
    \Delta_{\e}+\Delta_{(12)}+\Delta_{(12,34)}+\Delta_{(34)}          & \Delta_{(132)}+\Delta_{(1342)}+\Delta_{(23)}+\Delta_{(234)}           \\
    \Delta_{(123)}+\Delta_{(1243)}+\Delta_{(23)}+\Delta_{(243)}       & \Delta_{\e}+\Delta_{(13)}+\Delta_{(13,24)}+\Delta_{(24)}             \\
    \Delta_{(1234)}+\Delta_{(124)}+\Delta_{(234)}+\Delta_{(24)}        & \Delta_{(1324)}+\Delta_{(134)}+\Delta_{(243)}+\Delta_{(34)}           \\
    \Delta_{(13)}+\Delta_{(132)}+\Delta_{(143)}+\Delta_{(1432)}        & \Delta_{(12)}+\Delta_{(123)}+\Delta_{(142)}+\Delta_{(1423)}          \\
    \Delta_{(134)}+\Delta_{(1342)}+\Delta_{(14)}+\Delta_{(142)}        & \Delta_{(12,34)}+\Delta_{(1234)}+\Delta_{(14,23)}+\Delta_{(1432)}         \\
    \Delta_{(13,24)}+\Delta_{(1324)}+\Delta_{(14,23)}+\Delta_{(1423)}  & \Delta_{(124)}+\Delta_{(1243)}+\Delta_{(14)}+\Delta_{(143)}       & 
  \end{pmatrix}, \\
  \tilde{R}_2 &=
  \begin{pmatrix}
    \Delta_{(142)}+\Delta_{(1432)}+\Delta_{(24)}+\Delta_{(243)} & \Delta_{(123)}+\Delta_{(1234)}+\Delta_{(13)}+\Delta_{(134)} \\
    \Delta_{(1423)}+\Delta_{(143)}+\Delta_{(234)}+\Delta_{(34)} & \Delta_{(12)}+\Delta_{(124)}+\Delta_{(132)}+\Delta_{(1324)} \\
    \Delta_{\e}+\Delta_{(14)}+\Delta_{(14,23)}+\Delta_{(23)} &  \Delta_{(12,34)}+\Delta_{(1243)}+\Delta_{(13,24)}+\Delta_{(1342)} \\
    \Delta_{(12,34)}+\Delta_{(1243)}+\Delta_{(13,24)}+\Delta_{(1342)} & \Delta_{\e}+\Delta_{(14)}+\Delta_{(14,23)}+\Delta_{(23)} \\
    \Delta_{(12)}+\Delta_{(124)}+\Delta_{(132)}+\Delta_{(1324)} & \Delta_{(1423)}+\Delta_{(143)}+\Delta_{(234)}+\Delta_{(34)} \\
    \Delta_{(123)}+\Delta_{(1234)}+\Delta_{(13)}+\Delta_{(134)} & \Delta_{(142)}+\Delta_{(1432)}+\Delta_{(24)}+\Delta_{(243)}
  \end{pmatrix} \\
  \tilde{R}_3 &= \begin{pmatrix}
    \Delta_{(124)}+\Delta_{(1243)}+\Delta_{(14)}+\Delta_{(143)}       & \Delta_{(13,24)}+\Delta_{(1324)}+\Delta_{(14,23)}+\Delta_{(1423)} \\
    \Delta_{(12,34)}+\Delta_{(1234)}+\Delta_{(14,23)}+\Delta_{(1432)} & \Delta_{(134)}+\Delta_{(1342)}+\Delta_{(14)}+\Delta_{(142)}       \\
    \Delta_{(12)}+\Delta_{(123)}+\Delta_{(142)}+\Delta_{(1423)}       & \Delta_{(13)}+\Delta_{(132)}+\Delta_{(143)}+\Delta_{(1432)}       \\
    \Delta_{(1324)}+\Delta_{(134)}+\Delta_{(243)}+\Delta_{(34)}       & \Delta_{(1234)}+\Delta_{(124)}+\Delta_{(234)}+\Delta_{(24)}       \\
    \Delta_{\e}+\Delta_{(13)}+\Delta_{(13,24)}+\Delta_{(24)}   & \Delta_{(123)}+\Delta_{(1243)}+\Delta_{(23)}+\Delta_{(243)} \\
    \Delta_{(132)}+\Delta_{(1342)}+\Delta_{(23)}+\Delta_{(234)} & \Delta_{\e}+\Delta_{(12)}+\Delta_{(12,34)}+\Delta_{(34)}  \\
\end{pmatrix},                                                                  
\end{align}
and
\begin{align}
  \Delta_{\e} &= 1,
  & \Delta_{(12)} &= \e^{-\sigma^2\left(\tau_1-\tau_2\right)^2}, \nonumber\\
  \Delta_{(13)} &= e^{-\sigma^2\left(\tau_1-\tau_3\right)^2}, 
  & \Delta_{(14)} &= e^{-\sigma^2\left(\tau_1-\tau_4\right)^2}, \nonumber\\
  \Delta_{(23)} &= \e^{-\sigma^2\left(\tau _2-\tau_3\right)^2}, 
  & \Delta_{(24)} &= \e^{-\sigma^2\left(\tau_2-\tau_4\right)^2}, \nonumber\\
  \Delta_{(34)} &= \e^{-\sigma^2\left(\tau_3-\tau_4\right)^2}, 
  & \Delta_{(123)} &= \e^{-\sigma^2\left(\tau_1^2+\tau_2^2+\tau_3^2-\tau_1\tau_2-\tau_1\tau_3-\tau_2 \tau_3\right)}, \nonumber\\
  \Delta_{(124)} &= \e^{-\sigma^2\left(\tau_1^2+\tau_2^2+\tau_4^2-\tau_1\tau_2-\tau_1\tau_4-\tau_2\tau_4\right)}, 
  & \Delta_{(132)} &= \Delta_{123} \nonumber\\
  \Delta_{(134)} &= \e^{-\sigma^2\left(\tau_1^2+\tau_3^2+\tau_4^2-\tau_1\tau_3-\tau_1\tau_4-\tau_3\tau_4\right)}, 
  & \Delta_{(142)} &= \Delta_{124} \nonumber\\
  \Delta_{(143)} &= \Delta_{134} 
  & \Delta_{(234)} &= \e^{-\sigma^2\left(\tau_2^2+\tau_3^2+\tau_4^2-\tau_2\tau_3-\tau_2\tau_4-\tau_3\tau_4\right)}, \nonumber\\
  \Delta_{(243)} &= \Delta_{234} 
  & \Delta_{(1234)} &= \e^{-\sigma^2\left(\tau_1^2+\tau_2^2+\tau_3^2+\tau_4^2-\tau_1\tau_2-\tau_1\tau_4 -\tau_2 \tau_3-\tau_3 \tau_4\right)}, \nonumber\\
  \Delta_{(1243)} &= \e^{-\sigma^2\left(\tau_1^2+\tau_2^2+\tau_3^2+\tau_4^2-\tau_1\tau_2-\tau_1\tau_3-\tau_2\tau_4-\tau_3\tau_4\right)}, 
  & \Delta_{(1324)} &= \e^{-\sigma^2\left(\tau_1^2+\tau_2^2+\tau_3^2+\tau_4^2-\tau_1\tau_3-\tau_1\tau_4-\tau_2\tau_3-\tau_2\tau_4\right)}, \nonumber\\
  \Delta_{(1342)} &= \Delta_{1243}, 
  & \Delta_{(1423)} &= \Delta_{1324}, \nonumber\\
  \Delta_{(1432)} &= \Delta_{1234}, 
  & \Delta_{(12,34)} &= \e^{-\sigma^2\left[\left(\tau _1-\tau_2\right)^2+\left(\tau_3-\tau_4\right)^2\right]}, \nonumber\\
  \Delta_{(13,24)} &= \e^{-\sigma^2\left[\left(\tau_1-\tau_3\right)^2+\left(\tau_2-\tau_4\right)^2\right]}, 
  & \Delta_{(14,23)} &= \e^{-\sigma^2\left[\left(\tau_1-\tau_4\right)^2+\left(\tau_2-\tau_3\right)^2\right]}.
\end{align}
This compact form of the rate matrix in terms of $\Delta_\sigma$ is provided because we also want to verify that the rate matrix carries a representation of the symmetric group. Using Eq.~\eqref{eq:exampleupsilon} we find that the representation $\Gamma$ has basis
\begin{equation}
  B = \set{\ket{022;2233;\vec{\tau}},\ket{022;2323;\vec{\tau}},\ket{022;2332;\vec{\tau}},\ket{022;3223;\vec{\tau}},\ket{022;3232;\vec{\tau}},\ket{022;3322;\vec{\tau}}}.
\end{equation}
In this basis, using Eq.~\eqref{eq:standardrep}, the representation of the elements of $S_4$ are
\begin{align}
  \Gamma_{\e} &= \begin{pmatrix}
    1 & 0 & 0 & 0 & 0 & 0 \\
    0 & 1 & 0 & 0 & 0 & 0 \\
    0 & 0 & 1 & 0 & 0 & 0 \\
    0 & 0 & 0 & 1 & 0 & 0 \\
    0 & 0 & 0 & 0 & 1 & 0 \\
    0 & 0 & 0 & 0 & 0 & 1 \\
  \end{pmatrix}, \quad \Gamma_{(12)} =
  \begin{pmatrix}
    1 & 0 & 0 & 0 & 0 & 0 \\
    0 & 0 & 0 & 1 & 0 & 0 \\
    0 & 0 & 0 & 0 & 1 & 0 \\
    0 & 1 & 0 & 0 & 0 & 0 \\
    0 & 0 & 1 & 0 & 0 & 0 \\
    0 & 0 & 0 & 0 & 0 & 1
  \end{pmatrix}, \quad \Gamma_{(13)} =
  \begin{pmatrix}
    0 & 0 & 0 & 1 & 0 & 0 \\
    0 & 1 & 0 & 0 & 0 & 0 \\
    0 & 0 & 0 & 0 & 0 & 1 \\
    1 & 0 & 0 & 0 & 0 & 0 \\
    0 & 0 & 0 & 0 & 1 & 0 \\
    0 & 0 & 1 & 0 & 0 & 0
  \end{pmatrix}, \nonumber \\
  \Gamma_{(14)} &= \begin{pmatrix}
    0 & 0 & 0 & 0 & 1 & 0 \\
    0 & 0 & 0 & 0 & 0 & 1 \\
    0 & 0 & 1 & 0 & 0 & 0 \\
    0 & 0 & 0 & 1 & 0 & 0 \\
    1 & 0 & 0 & 0 & 0 & 0 \\
    0 & 1 & 0 & 0 & 0 & 0
  \end{pmatrix}, \quad \Gamma_{(23)} =
  \begin{pmatrix}
    0 & 1 & 0 & 0 & 0 & 0 \\
    1 & 0 & 0 & 0 & 0 & 0 \\
    0 & 0 & 1 & 0 & 0 & 0 \\
    0 & 0 & 0 & 1 & 0 & 0 \\
    0 & 0 & 0 & 0 & 0 & 1 \\
    0 & 0 & 0 & 0 & 1 & 0
  \end{pmatrix}, \quad \Gamma_{(24)} =
  \begin{pmatrix}
    0 & 0 & 1 & 0 & 0 & 0 \\
    0 & 1 & 0 & 0 & 0 & 0 \\
    1 & 0 & 0 & 0 & 0 & 0 \\
    0 & 0 & 0 & 0 & 0 & 1 \\
    0 & 0 & 0 & 0 & 1 & 0 \\
    0 & 0 & 0 & 1 & 0 & 0
  \end{pmatrix}, \nonumber \\
  \Gamma_{(34)} &= \begin{pmatrix}
    1 & 0 & 0 & 0 & 0 & 0 \\
    0 & 0 & 1 & 0 & 0 & 0 \\
    0 & 1 & 0 & 0 & 0 & 0 \\
    0 & 0 & 0 & 0 & 1 & 0 \\
    0 & 0 & 0 & 1 & 0 & 0 \\
    0 & 0 & 0 & 0 & 0 & 1
  \end{pmatrix}, \quad \Gamma_{(123)} =
  \begin{pmatrix}
    0 & 0 & 0 & 1 & 0 & 0 \\
    1 & 0 & 0 & 0 & 0 & 0 \\
    0 & 0 & 0 & 0 & 1 & 0 \\
    0 & 1 & 0 & 0 & 0 & 0 \\
    0 & 0 & 0 & 0 & 0 & 1 \\
    0 & 0 & 1 & 0 & 0 & 0 \\
  \end{pmatrix}, \quad \Gamma_{(124)} =
  \begin{pmatrix}
    0 & 0 & 0 & 0 & 1 & 0 \\
    0 & 0 & 0 & 1 & 0 & 0 \\
    1 & 0 & 0 & 0 & 0 & 0 \\
    0 & 0 & 0 & 0 & 0 & 1 \\
    0 & 0 & 1 & 0 & 0 & 0 \\
    0 & 1 & 0 & 0 & 0 & 0 \\
  \end{pmatrix}, \nonumber \\
  \Gamma_{(132)} &= \begin{pmatrix}
    0 & 1 & 0 & 0 & 0 & 0 \\
    0 & 0 & 0 & 1 & 0 & 0 \\
    0 & 0 & 0 & 0 & 0 & 1 \\
    1 & 0 & 0 & 0 & 0 & 0 \\
    0 & 0 & 1 & 0 & 0 & 0 \\
    0 & 0 & 0 & 0 & 1 & 0 \\
  \end{pmatrix}, \quad \Gamma_{(134)} =
  \begin{pmatrix}
    0 & 0 & 0 & 1 & 0 & 0 \\
    0 & 0 & 0 & 0 & 0 & 1 \\
    0 & 1 & 0 & 0 & 0 & 0 \\
    0 & 0 & 0 & 0 & 1 & 0 \\
    1 & 0 & 0 & 0 & 0 & 0 \\
    0 & 0 & 1 & 0 & 0 & 0 \\
  \end{pmatrix}, \quad \Gamma_{(142)} =
  \begin{pmatrix}
    0 & 0 & 1 & 0 & 0 & 0 \\
    0 & 0 & 0 & 0 & 0 & 1 \\
    0 & 0 & 0 & 0 & 1 & 0 \\
    0 & 1 & 0 & 0 & 0 & 0 \\
    1 & 0 & 0 & 0 & 0 & 0 \\
    0 & 0 & 0 & 1 & 0 & 0 \\
  \end{pmatrix}, \nonumber \\
  \Gamma_{(143)} &= \begin{pmatrix}
    0 & 0 & 0 & 0 & 1 & 0 \\
    0 & 0 & 1 & 0 & 0 & 0 \\
    0 & 0 & 0 & 0 & 0 & 1 \\
    1 & 0 & 0 & 0 & 0 & 0 \\
    0 & 0 & 0 & 1 & 0 & 0 \\
    0 & 1 & 0 & 0 & 0 & 0 \\
  \end{pmatrix}, \quad \Gamma_{(234)} =
  \begin{pmatrix}
    0 & 1 & 0 & 0 & 0 & 0 \\
    0 & 0 & 1 & 0 & 0 & 0 \\
    1 & 0 & 0 & 0 & 0 & 0 \\
    0 & 0 & 0 & 0 & 0 & 1 \\
    0 & 0 & 0 & 1 & 0 & 0 \\
    0 & 0 & 0 & 0 & 1 & 0 \\
  \end{pmatrix} \quad \Gamma_{(243)} =
  \begin{pmatrix}
    0 & 0 & 1 & 0 & 0 & 0 \\
    1 & 0 & 0 & 0 & 0 & 0 \\
    0 & 1 & 0 & 0 & 0 & 0 \\
    0 & 0 & 0 & 0 & 1 & 0 \\
    0 & 0 & 0 & 0 & 0 & 1 \\
    0 & 0 & 0 & 1 & 0 & 0 \\
  \end{pmatrix}, \nonumber \\
  \Gamma_{(1234)} &= \begin{pmatrix}
    0 & 0 & 0 & 1 & 0 & 0 \\
    0 & 0 & 0 & 0 & 1 & 0 \\
    1 & 0 & 0 & 0 & 0 & 0 \\
    0 & 0 & 0 & 0 & 0 & 1 \\
    0 & 1 & 0 & 0 & 0 & 0 \\
    0 & 0 & 1 & 0 & 0 & 0 \\
  \end{pmatrix}, \quad \Gamma_{(1243)} =
  \begin{pmatrix}
    0 & 0 & 0 & 0 & 1 & 0 \\
    1 & 0 & 0 & 0 & 0 & 0 \\
    0 & 0 & 0 & 1 & 0 & 0 \\
    0 & 0 & 1 & 0 & 0 & 0 \\
    0 & 0 & 0 & 0 & 0 & 1 \\
    0 & 1 & 0 & 0 & 0 & 0 \\
  \end{pmatrix}, \quad \Gamma_{(1324)} =
  \begin{pmatrix}
    0 & 0 & 0 & 0 & 0 & 1 \\
    0 & 0 & 0 & 1 & 0 & 0 \\
    0 & 1 & 0 & 0 & 0 & 0 \\
    0 & 0 & 0 & 0 & 1 & 0 \\
    0 & 0 & 1 & 0 & 0 & 0 \\
    1 & 0 & 0 & 0 & 0 & 0 \\
  \end{pmatrix}, \nonumber \\
  \Gamma_{(1342)} &= \begin{pmatrix}
    0 & 1 & 0 & 0 & 0 & 0 \\
    0 & 0 & 0 & 0 & 0 & 1 \\
    0 & 0 & 0 & 1 & 0 & 0 \\
    0 & 0 & 1 & 0 & 0 & 0 \\
    1 & 0 & 0 & 0 & 0 & 0 \\
    0 & 0 & 0 & 0 & 1 & 0 \\
  \end{pmatrix}, \quad \Gamma_{(1423)} =
  \begin{pmatrix}
    0 & 0 & 0 & 0 & 0 & 1 \\
    0 & 0 & 1 & 0 & 0 & 0 \\
    0 & 0 & 0 & 0 & 1 & 0 \\
    0 & 1 & 0 & 0 & 0 & 0 \\
    0 & 0 & 0 & 1 & 0 & 0 \\
    1 & 0 & 0 & 0 & 0 & 0 \\
  \end{pmatrix}, \quad \Gamma_{(1432)} =
  \begin{pmatrix}
    0 & 0 & 1 & 0 & 0 & 0 \\
    0 & 0 & 0 & 0 & 1 & 0 \\
    0 & 0 & 0 & 0 & 0 & 1 \\
    1 & 0 & 0 & 0 & 0 & 0 \\
    0 & 1 & 0 & 0 & 0 & 0 \\
    0 & 0 & 0 & 1 & 0 & 0 \\
  \end{pmatrix}, \nonumber \\
  \Gamma_{(12,34)} &= \begin{pmatrix}
    1 & 0 & 0 & 0 & 0 & 0 \\
    0 & 0 & 0 & 0 & 1 & 0 \\
    0 & 0 & 0 & 1 & 0 & 0 \\
    0 & 0 & 1 & 0 & 0 & 0 \\
    0 & 1 & 0 & 0 & 0 & 0 \\
    0 & 0 & 0 & 0 & 0 & 1 \\
  \end{pmatrix}, \quad \Gamma_{(13,24)} =
  \begin{pmatrix}
    0 & 0 & 0 & 0 & 0 & 1 \\
    0 & 1 & 0 & 0 & 0 & 0 \\
    0 & 0 & 0 & 1 & 0 & 0 \\
    0 & 0 & 1 & 0 & 0 & 0 \\
    0 & 0 & 0 & 0 & 1 & 0 \\
    1 & 0 & 0 & 0 & 0 & 0 \\
  \end{pmatrix}, \quad \Gamma_{(14,23)} =
  \begin{pmatrix}
    0 & 0 & 0 & 0 & 0 & 1 \\
    0 & 0 & 0 & 0 & 1 & 0 \\
    0 & 0 & 1 & 0 & 0 & 0 \\
    0 & 0 & 0 & 1 & 0 & 0 \\
    0 & 1 & 0 & 0 & 0 & 0 \\
    1 & 0 & 0 & 0 & 0 & 0 \\
  \end{pmatrix}.
\end{align}
Using these representations it is easy to verify via inspection that $R(\vec{\tau}) = \sum_{\sigma\in S_4} \Delta_\sigma(\vec{\tau})\Gamma_\sigma$, which means the rate matrix does carry the representation $\Gamma$ of $S_4$. Therefore, in order to block-diagonalize the rate matrix we have to reduce $\Gamma$ to the irreps of $S_4$.

First we discuss which irreps do occur in this reduction. The irreps of $S_n$ are labeled by $n$-box Young diagrams. The irreps of $S_4$ are labeled by the Young diagrams $\inlineyng{4}$, $\inlineyng{3,1}$, $\inlineyng{2,2}$, $\inlineyng{2,1,1}$ and $\inlineyng{1,1,1,1}$. The rule for determining which irreps do occur is to count the number of semi-standard Young tableaus for each irrep using the string $\vec{\xi}=2233$. For each of the first three Young diagrams there is only one possible semi-standard Young tableau. These are
\begin{equation}
  {\Yboxdim{10pt}\young(2233), \qquad \young(223,3), \qquad \young(22,33)}. \label{eq:semistandardtableaus}
\end{equation}
For the last two Young diagrams, $\inlineyng{2,1,1}$ and $\inlineyng{1,1,1,1}$ there is no possible way to construct a valid semi-standard Young tableau. Therefore, the corresponding irreps do not occur in the decomposition. $\Gamma$ decomposes as
\begin{equation}
  \Gamma = \dispyng{4} \oplus \dispyng{3,1} \oplus \dispyng{2,2}. \label{eq:221decomposition}
\end{equation}
This can also be verified using orthogonality of characters~\cite{Chen1989}. This decomposition is also identical to the decomposition of the tensor product $(2,0,0)\otimes(2,0,0)$ of the $\mathrm{su}(4)$ irrep; the irrep $(2,0,0)$ is appropriate to describe the possible states of two indistinguishable photons, as required by our detection
scheme where output photons in a given output port are indistinguishable~\cite{Chen1989}.  The Young tableau manipulations of
Eq.(\eqref{eq:semistandardtableaus}) are simply an application of the well-known Littlewood-Richardson rule for decomposing
tensor products of $\mathrm{su}(n)$ irreps. The dimension of each irrep $\lambda$ is the number of standard Young tableaux that can be constructed from the Young diagram $\lambda$. For the irreps that occur in~\eqref{eq:221decomposition} the standard Young tableaux are
\begin{equation}
  {\Yboxdim{10pt} \young(1234), \qquad \young(123,4), \qquad \young(124,3), \qquad \young(134,2), \qquad \young(12,34), \qquad \young(13,24).} \label{eq:irrepdims}
\end{equation}
Based on these constructions we can infer that the irrep $\inlineyng{4}$ is one dimensional, the irrep $\inlineyng{3,1}$ is three dimensional and the irrep $\inlineyng{2,2}$ is two dimensional. The sum of these dimensions is six, the same as the size of the rate matrix and the dimension of $\Gamma$. 

We now find the basis which reduces $\Gamma$ to block-diagonal form. We use the method of class-operators. The basis vectors for an irrep are the linear combinations of the eigenvectors of the complete set of commuting operators (CSCO). A CSCO of symmetric group $S_n$ is formed from the set $\set{D_k^{(2)}}_{k=2}^n$ of two-cycle class operators~\cite{Chen1989},
\begin{equation}
  D_k^{(2)} = \sum_{\sigma^{(2)}\in S_k} \sigma^{(2)}, \label{eq:classoperatordef}
\end{equation}
of the canonical subgroup chain $S_n \supset S_{n-1} \cdots \supset S_2$. Using the representations of the two cycles we can construct the two-cycle class operators,
\begin{align}
  D_2(2) &=
  \begin{pmatrix}
    1 & 0 & 0 & 0 & 0 & 0 \\
    0 & 0 & 0 & 1 & 0 & 0 \\
    0 & 0 & 0 & 0 & 1 & 0 \\
    0 & 1 & 0 & 0 & 0 & 0 \\
    0 & 0 & 1 & 0 & 0 & 0 \\
    0 & 0 & 0 & 0 & 0 & 1
  \end{pmatrix}, \quad  D_3(2) = \begin{pmatrix}
    1 & 1 & 0 & 1 & 0 & 0 \\
    1 & 1 & 0 & 1 & 0 & 0 \\
    0 & 0 & 1 & 0 & 1 & 1 \\
    1 & 1 & 0 & 1 & 0 & 0 \\
    0 & 0 & 1 & 0 & 1 & 1 \\
    0 & 0 & 1 & 0 & 1 & 1
  \end{pmatrix}, \quad  D_4(2) = \begin{pmatrix}
    2 & 1 & 1 & 1 & 1 & 0 \\
    1 & 2 & 1 & 1 & 0 & 1 \\
    1 & 1 & 2 & 0 & 1 & 1 \\
    1 & 1 & 0 & 2 & 1 & 1 \\
    1 & 0 & 1 & 1 & 2 & 1 \\
    0 & 1 & 1 & 1 & 1 & 2
  \end{pmatrix}.
\end{align}
Next, the eigenbasis $V$ that simultaneously diagonalizes the operators $\set{D_k^{(2)}}_{k=2}^n$ is found. The eigenbasis can be found using numerical simultaneous diagonalization algorithms~\cite{Gerstner1993}. However, for small $n$, a simpler procedure is as follows. The two-cycle class operators have eigenvalues
\begin{equation}
  \kappa_\lambda^{(2)} = \frac{n}{2} + \frac{1}{2}\sum_{\ell=1}^m \lambda_\ell(\lambda_\ell-2\ell), \label{eq:classoperatoreig}
\end{equation}
where each eigenvalue is labeled by a Young diagram $\lambda$ and $\lambda_\ell$ is the number of boxes on the $\ell$-th row of the Young diagram $\lambda$.  The chains for $S_4$ are 
\begin{equation}
  \begingroup
\renewcommand*{\arraystretch}{1.5}
  \Yboxdim{10pt}
  \begin{matrix}
     \young(1234) & \to & \young(123) & \to & \young(12),      \\[0.3em]
    (6)       &     & (3)      &     & (1)       \\[0.5em]
    \young(123,4) & \to & \young(123) & \to & \young(12),    \\[0.3em]
    (2)       &     & (3)      &     & (1)       \\[0.5em]
    \young(124,3) & \to & \young(12,3) & \to & \young(12),  \\[0.3em]
    (2)       &     & (0)      &     & (1)       \\[0.5em]
    \young(134,2) & \to & \young(13,2) & \to & \young(1,2),\\[0.3em]
    (2)       &     & (0)      &     & (-1)      \\[0.5em]
    \young(12,34) & \to & \young(12,3) & \to & \young(12),  \\[0.3em]
    (0)       &     & (0)      &     & (1)       \\[0.5em]
    \young(13,24) & \to & \young(13,2) & \to & \young(1,2),\\[0.3em]
    (0)       &     & (0)      &     & (-1) 
  \end{matrix}
  \endgroup
\end{equation}
where the eigenvalue associated with each Young diagram via Eq.~\eqref{eq:classoperatoreig} is given below it. Each of the chains of Young diagrams is associated with one simultaneous eigenvector of $\set{D_k^{(2)}}_{k=2}^n$. For instance, the first eigenvector has eigenvalue $6$ for $D_4^{(2)}$, eigenvalue $3$ for $D_3^{(2)}$ and eigenvalue $1$ for $D_2^{(2)}$. 

A simple way of finding the eigenbasis is to use the operator 
\begin{equation}
  D^{(2)} = \sum_{k=2}^4 \alpha_i D_k^{(2)},
\end{equation}
where the constants $\alpha_i$ are chosen so that $D^{(2)}$ has no repeated eigenvalues. For $n \le 6$ the choice of $\alpha_i = i+7$ works. With this choice
\begin{equation}
  D^{(2)} =
  \begin{pmatrix}
    41 & 21 & 11 & 21 & 11 & 0 \\
    21 & 32 & 11 & 30 & 0 & 11 \\
    11 & 11 & 32 & 0 & 30 & 21 \\
    21 & 30 & 0 & 32 & 11 & 11 \\
    11 & 0 & 30 & 11 & 32 & 21 \\
    0 & 11 & 21 & 11 & 21 & 41
  \end{pmatrix}.
\end{equation}
The eigenbasis of $D^{(2)}$ is
\begin{equation}
    \begingroup
\renewcommand*{\arraystretch}{1.5}
  V = \begin{pmatrix}
    \frac{1}{\sqrt{6}} & \frac{1}{\sqrt{6}} & \frac{1}{\sqrt{6}} & \frac{1}{\sqrt{6}} & \frac{1}{\sqrt{6}} & \frac{1}{\sqrt{6}} \\
    0 & -\frac{1}{2} & -\frac{1}{2} & \frac{1}{2} & \frac{1}{2} & 0 \\
    -\frac{1}{\sqrt{3}} & \frac{1}{2 \sqrt{3}} & -\frac{1}{2 \sqrt{3}} & \frac{1}{2 \sqrt{3}} & -\frac{1}{2 \sqrt{3}} & \frac{1}{\sqrt{3}} \\
    \frac{1}{\sqrt{6}} & \frac{1}{\sqrt{6}} & -\frac{1}{\sqrt{6}} & \frac{1}{\sqrt{6}} & -\frac{1}{\sqrt{6}} & -\frac{1}{\sqrt{6}}  \\
    0 & \frac{1}{2} & -\frac{1}{2} & -\frac{1}{2} & \frac{1}{2} & 0 \\
    \frac{1}{\sqrt{3}} & -\frac{1}{2 \sqrt{3}} & -\frac{1}{2 \sqrt{3}} & -\frac{1}{2 \sqrt{3}} & -\frac{1}{2 \sqrt{3}} & \frac{1}{\sqrt{3}}
  \end{pmatrix}, \label{eq:basistransformation22}
  \endgroup
\end{equation}
where the vectors (rows) have been ordered to match the ordering of the irreps in~\eqref{eq:221decomposition}. In this basis the rate matrix $VR(\vec{\tau})V^{\mathrm{T}}$ is block-diagonal. 

We now analyze the effects of the basis change $V$ on the interferometer matrix $\vec{u}$. Using Eqs.~\eqref{eq:interferometervectorexample} and~\eqref{eq:basistransformation22}, we find
\begin{equation}
  V\vec{u} = \frac{1}{\sqrt{6}}
  \begin{pmatrix}
    U_{23}U_{33}U_{12}^2 + 2U_{13}U_{23}U_{32}U_{12} + 2U_{13}U_{22}U_{33}U_{12} + U_{13}^2U_{22}U_{32} \\
    0 \\
    -\sqrt{2}U_{23}U_{33}U_{12}^2 - \sqrt{2}U_{13}U_{23}U_{32}U_{12} + \sqrt{2}U_{13}U_{22}U_{33}U_{12} + \sqrt{2}U_{13}^2U_{22}U_{32} \\
    -U_{23}U_{33}U_{12}^2 + 2U_{13}U_{23}U_{32}U_{12} - 2U_{13}U_{22}U_{33}U_{12} + U_{13}^2U_{22}U_{32} \\
    0 \\
    \sqrt{2}U_{23}U_{33}U_{12}^2 - \sqrt{2}U_{13}U_{23}U_{32}U_{12} - \sqrt{2}U_{13}U_{22}U_{33}U_{12} + \sqrt{2}U_{13}^2U_{22}U_{32} 
 \end{pmatrix}. \label{eq:rotatedinterferometervectorexample}
\end{equation}

The entries of $V\vec{u}$ can be expressed as immanants. Immanants are defined using the characters of the symmetric group. For $S_4$ the relevant characters are provided in Table~\ref{tab:charactertableS4}.
\begin{table}[h]
  \centering
  \begin{tabular}{c| c c c c c}
    Irrep $\lambda$/Character & $\chi^\lambda\left(\inlineyng{4}\right)$ & $\chi^\lambda\left(\inlineyng{3,1}\right)$ & $\chi^\lambda\left(\inlineyng{2,2}\right)$ & $\chi^\lambda\left(\inlineyng{2,1,1}\right)$ & $\chi^\lambda\left(\inlineyng{1,1,1,1}\right)$  \\\hline
    $\inlineyng{4}$ & $1$ & $1$ & $1$ & $1$ & $1$ \\
    $\inlineyng{3,1}$ & $-1$ & $0$ & $-1$ & $1$ & $3$ \\
    $\inlineyng{2,2}$ & $0$ & $-1$ & $2$ & $0$ & $2$
  \end{tabular}
  \caption{Characters for three irreps of $S_4$. Each row provides the character of one irrep, where each column entry is the character of one class of $S_4$.}
  \label{tab:charactertableS4}
\end{table}

The immanants are functions of the scattering matrix $T$ or the matrix found by permuting the rows of $T$. Using the definition of the scattering matrix~\eqref{eq:scatteringmatrixdef}, where $\vec{\upsilon}=1123$ and $\vec{\xi}=2233$, the matrix 
\begin{equation}
  T =
  \begin{pmatrix}
    U_{12} & U_{12} & U_{13} & U_{13} \\
    U_{12} & U_{12} & U_{13} & U_{13} \\
    U_{22} & U_{22} & U_{23} & U_{23} \\
    U_{32} & U_{32} & U_{33} & U_{33}
  \end{pmatrix}.
\end{equation}
Using Eq.~\eqref{eq:immanantdef} and Table~\ref{tab:charactertableS4}, we find that this matrix has associated immanants
\begin{align}
  \imm{4}T &= 4 U_{23} U_{33} U_{12}^2+8 U_{13} U_{23} U_{32} U_{12}+8 U_{13} U_{22} U_{33} U_{12}+4U_{13}^2 U_{22} U_{32}, \\
  \imm{3,1}T &= 4 U_{12}^2 U_{23} U_{33}-4 U_{13}^2 U_{22} U_{32}, \\
  \imm{3,1}T_{(13)} &= 4 U_{12} U_{13} U_{22} U_{33}-4 U_{12} U_{13} U_{23} U_{32}, \\
  \imm{2,2}T &= 4 U_{23} U_{33} U_{12}^2-4 U_{13} U_{23} U_{32} U_{12}-4 U_{13} U_{22} U_{33} U_{12}+4
   U_{13}^2 U_{22} U_{32},
\end{align}
where we remind the reader that $T_{(13)}$ is the matrix $T$ with its first and third rows permuted. This is the complete list of distinct immanants of types $\inlineyng{4}$, $\inlineyng{3,1}$ and $\inlineyng{2,2}$ that can be formed from $T_\sigma$ for all $\sigma \in S_4$.

To transform the entries of $V\vec{u}$ into immanants, we require the help of Eq.~\eqref{eq:irrepdims} which specifies the sizes of each block of the rotated interferometer vector $V\vec{u}$. We note that the first block associated with irrep $\inlineyng{4}$ is of size $1$. The first entry in Eq.~\eqref{eq:rotatedinterferometervectorexample}, must then be the permanent $\imm{4}T$ of the scattering matrix up to a constant. We can verify that this is so. The next block, associated with irrep $\inlineyng{3,1}$ is three-dimensional. We should able to express the second to fifth entries of Eq.~\eqref{eq:rotatedinterferometervectorexample} in terms of the two immanants $\imm{3,1}T$ and $\imm{3,1}T_{(13)}$. The simple rotation,
\begin{equation}
\frac{1}{4\sqrt{6}}\begin{pmatrix}
    0 & 0 \\
    -\sqrt{2} & \sqrt{2} \\
    -1 & -2
  \end{pmatrix}
  \begin{pmatrix}
    \imm{3,1}T \\ \imm{3,1}T_{(13)}
  \end{pmatrix},
\end{equation}
is indeed equal to the second to fifth entries of Eq.~\eqref{eq:rotatedinterferometervectorexample}. The same procedure can be followed for the final two-dimensional block associated with the irrep $\inlineyng{2,2}$. Hence, in the basis $V$ the entries of the interferometer vector,
\begin{equation}
  V\vec{u} =
  \frac{1}{4\sqrt{6}}\begin{pmatrix}
    \imm{4}T \\
    0 \\
    -\sqrt{2}\imm{3,1}T +\sqrt{2}\imm{3,1}T_{(13)} \\
    -\imm{3,1}T - 2\imm{3,1}T_{(13)} \\
    0\\
    \sqrt{2}\imm{2,2}T
  \end{pmatrix}, \label{eq:interferometervectorimmanants}
\end{equation}
are transformed to immanants.

Finally, using the rate expression~\eqref{eq:rotatedrate}, we find that
\begin{align}
  C(\vec{\tau}) &= \frac{1}{96}(\Delta_{\e} + \Delta_{(12)} + \Delta_{(123)} + \Delta_{(12,34)} + \Delta_{(1234)} + \Delta_{(124)} + \Delta_{(1243)} + \Delta_{(13)} + \Delta_{(132)} \nonumber\\&\quad+ \Delta_{(13,24)} + \Delta_{(1324)} + \Delta_{(134)} + \Delta_{(1342)} + \Delta_{(14)} + \Delta_{(142)} + \Delta_{(14,23)} + \Delta_{(1423)} \nonumber\\&\quad+ \Delta_{(143)} + \Delta_{(1432)} + \Delta_{(23)} + \Delta_{(234)} + \Delta_{(24)} + \Delta_{(243)} + \Delta_{(34)})|\imm{4}T|^2 \nonumber\\&\quad+ \frac{1}{32}(\Delta_{\e} + \Delta_{(12)} + \Delta_{(12,34)} - \Delta_{(13,24)} - \Delta_{(1324)} - \Delta_{(14,23)} - \Delta_{(1423)} + \Delta_{(34)})|\imm{3,1}T|^2 \nonumber\\&\quad +  \frac{1}{96}(2\Delta_{\e} + 2\Delta_{(12)} + \Delta_{(123)} - 2\Delta_{(12,34)} - \Delta_{(1234)} + \Delta_{(124)} - \Delta_{(1243)} + \Delta_{(13)} + \Delta_{(132)} - \Delta_{(134)} \nonumber\\&\quad - \Delta_{(1342)} + \Delta_{(14)} + \Delta_{(142)} - \Delta_{(143)} - \Delta_{(1432)} + \Delta_{(23)} - \Delta_{(234)} + \Delta_{(24)} - \Delta_{(243)} - 2\Delta_{(34)})|\imm{3,1}T_{(13)}|^2 \nonumber\\&\quad+ \frac{1}{96}(\Delta_{(123)} - \Delta_{(1234)} - \Delta_{(124)} + \Delta_{(1243)} + \Delta_{(13)} + \Delta_{(132)} - \Delta_{(134)} - \Delta_{(1342)} - \Delta_{(14)} \nonumber\\&\quad- \Delta_{(142)} + \Delta_{(143)} + \Delta_{(1432)} + \Delta_{(23)} - \Delta_{(234)} - \Delta_{(24)} + \Delta_{(243)})(\imm{3,1}T) (\imm{3,1}T_{(13)})^* \nonumber\\&\quad+ \frac{1}{96}(\Delta_{(123)} + \Delta_{(1234)} - \Delta_{(124)} - \Delta_{(1243)} + \Delta_{(13)} + \Delta_{(132)} + \Delta_{(134)} + \Delta_{(1342)} - \Delta_{(14)} \nonumber\\&\quad- \Delta_{(142)} - \Delta_{(143)} - \Delta_{(1432)} + \Delta_{(23)} + \Delta_{(234)} - \Delta_{(24)} - \Delta_{(243)}) (\imm{3,1}T_{(13)}) (\imm{3,1}T)^* \nonumber\\&\quad + \frac{1}{96}(2\Delta_{\e} + 2\Delta_{(12)} - \Delta_{(123)} + 2\Delta_{(12,34)} - \Delta_{(1234)} - \Delta_{(124)} - \Delta_{(1243)} - \Delta_{(13)} - \Delta_{(132)} + 2\Delta_{(13,24)} \nonumber\\&\quad + 2\Delta_{(1324)} - \Delta_{(134)} - \Delta_{(1342)} - \Delta_{(14)} - \Delta_{(142)} + 2\Delta_{(14,23)} + 2\Delta_{(1423)} - \Delta_{(143)} - \Delta_{(1432)} - \Delta_{(23)} \nonumber\\&\quad- \Delta_{(234)} - \Delta_{(24)} - \Delta_{(243)} + 2\Delta_{(34)})|\imm{2,2}T|^2. \label{eq:examplerate}
\end{align}

The coincidence landscape shown in Fig.~\ref{fig:landscape211} is plotted using the interferometer matrix
\begin{equation}
  U = \begin{pmatrix}
     \phantom{+}0.232231+0.437219\iota & -0.271046+0.371938\iota & \phantom{+}0.168757+0.717374\iota \\
 -0.430781+0.406851\iota & -0.447972-0.0160114\iota & \phantom{+}0.539331-0.396341\iota \\
 -0.170262-0.612224\iota & -0.476765+0.599963\iota & -0.0840731-0.043172\iota \\
  \end{pmatrix}. \label{eq:plotmatrix}
\end{equation}

\section{Hong-Ou-Mandel Corrections}
\label{sec:appendixhom}
In this section we present the corrected rate for the HOM expression. For input state $\ket{11;12;\vec{\tau}}$ and measurement operator $M_{11}$, the scattering matrix
\begin{equation}
  T^{12,12} =
  \begin{pmatrix}
    U_{11} & U_{12} \\
    U_{21} & U_{22} 
  \end{pmatrix}.
\end{equation}
This matrix has two non-trivial immanants; the permanent,
\begin{equation}
  \imm{2}T = U_{12}U_{21}+U_{11}U_{22}, 
\end{equation}
and the determinant,
\begin{equation}
 \imm{1,1}T = U_{11} U_{22}-U_{12} U_{21}.
\end{equation}
The coincidence rate is given by
\begin{align}
C^{12,12}(\vec{\tau}) &=  \left(\Delta_{\e}+ \Delta_{(12)}\right)\left|\imm{2}T^{12,12}\right|^2 + \left(\Delta_{\e}-\Delta_{(12)}\right)\left|\imm{1,1}T^{12,12}\right|^2,
\end{align}
where
\begin{align}
  \Delta_{\e} &= \int\dw{}_1\dw{}_2 |\phi(\omega_1)|^2|\phi(\omega_2)|^2,\\
  \Delta_{(12)} &= \int\dw{}_1\dw{}_2 \left|\phi(\omega_1)\right|^2\left|\phi(\omega_2)\right|^2\e^{\iota\omega_1(\tau_1-\tau_2)}\e^{\iota\omega_2(\tau_2-\tau_1)}.
\end{align}

For input state $\ket{22;1122;\vec{\tau}}$ and measurement operator $M_{22}$, the scattering matrix
\begin{equation}
  T^{1122,1122} =
  \begin{pmatrix}
    U_{11} & U_{11} & U_{12} & U_{12} \\
    U_{11} & U_{11} & U_{12} & U_{12} \\
    U_{21} & U_{21} & U_{22} & U_{22} \\
    U_{21} & U_{21} & U_{22} & U_{22} \\
  \end{pmatrix}.
\end{equation}
This matrix has three distinct immanants of the $\inlineyng{4},\inlineyng{3,1},\inlineyng{2,2}$ types,
\begin{align}
  \imm{4}T^{1122,1122} &= 4U_{12}^2U_{21}^2 + 16U_{11}U_{12}U_{22}U_{21} + 4U_{11}^2U_{22}^2, \\
  \imm{3,1}T^{1122,1122} &= 4U_{11}^2U_{22}^2 - 4U_{12}^2U_{21}^2, \\
  \imm{2,2}T^{1122,1122} &= 4U_{12}^2U_{21}^2 - 8U_{11}U_{12}U_{22}U_{21} + 4U_{11}^2U_{22}^2.
\end{align}
The coincidence rate is given by
\begin{align}
  C^{1122,1122}(\vec{\tau}) &= \frac{1}{96}(\Delta_{\e}+\Delta_{(12)}+\Delta_{(123)}+\Delta_{(12,34)}+\Delta_{(1234)}+\Delta_{(124)}+\Delta_{(1243)}+\Delta_{(13)}+\Delta_{(132)}+\Delta_{(13,24)}\nonumber\\&\qquad+\Delta_{(1324)}+\Delta_{(134)}+\Delta_{(1342)} +\Delta_{(14)}+\Delta_{(142)}+\Delta_{(14,23)}+\Delta_{(1423)}+\Delta_{(143)}+\Delta_{(1432)}\nonumber\\&\quad+\Delta_{(23)}+\Delta_{(234)}+\Delta_{(24)}+\Delta_{(243)}+\Delta_{(34)})\left|\imm{4}T^{1122,1122}\right|^2 + \frac{1}{32}(\Delta_{\e}+\Delta_{(12)}+\Delta_{(12,34)}\nonumber\\&\quad-\Delta_{(13,24)}-\Delta_{(1324)}-\Delta_{(14,23)}-\Delta_{(1423)}+\Delta_{(34)})\left|\imm{3,1}T^{1122,1122}\right|^2 + \frac{1}{96}(2\Delta_{\e}+2\Delta_{(12)}\nonumber\\&\quad-\Delta_{(123)}+2\Delta_{(12,34)}-\Delta_{(1234)}-\Delta_{(124)}-\Delta_{(1243)}-\Delta_{(13)}-\Delta_{(132)}+2\Delta_{(13,24)}+2\Delta_{(1324)}\nonumber\\&\qquad-\Delta_{(134)}-\Delta_{(1342)}-\Delta_{(14)}-\Delta_{(142)}+2\Delta_{(14,23)}+2\Delta_{(1423)}-\Delta_{(143)}-\Delta_{(1432)}-\Delta_{(23)}\nonumber\\&\qquad-\Delta_{(234)}-\Delta_{(24)}-\Delta_{(243)}+2\Delta_{(34)})\left|\imm{2,2}T^{1122,1122}\right|^2,
\end{align}
where
\begin{equation}
  \Delta_\sigma = \int \dw{}_1\dw{}_2\dw{}_3\dw{}_4 |\phi(\omega_1)\phi(\omega_2)\phi(\omega_3)\phi(\omega_4)|^2P_\sigma\e^{-\iota(\omega_1\tau_1+\omega_2\tau_2+\omega_3\tau_3+\omega_4\tau_4)} \label{eq:deltadef}
\end{equation}

For input state $\ket{22;1122;\vec{\tau}}$ and measurement operator $M_{31}$, the scattering matrix
\begin{equation}
  T^{1122,1112} =
  \begin{pmatrix}
    U_{11} & U_{11} & U_{11} & U_{12} \\
    U_{11} & U_{11} & U_{11} & U_{12} \\
    U_{21} & U_{21} & U_{21} & U_{22} \\
    U_{21} & U_{21} & U_{21} & U_{22} \\
  \end{pmatrix}.
\end{equation}
This matrix has two distinct immanants of the $\inlineyng{4},\inlineyng{3,1}$ types,
\begin{align}
  \imm{4}T^{1122,1112} &= 12U_{21}U_{22}U_{11}^2 + 12U_{12}U_{21}^2U_{11}, \\
  \imm{3,1}T^{1122,1112} &=  4U_{11}^2U_{21}U_{22} - 4U_{11}U_{12}U_{21}^2.
\end{align}
The coincidence rate is given by
\begin{align}
  C^{1122,1112}(\vec{\tau}) &= \frac{1}{144}(\Delta_{\e} + \Delta_{(12)} + \Delta_{(123)} + \Delta_{(12,34)} + \Delta_{(1234)} + \Delta_{(124)} + \Delta_{(1243)} + \Delta_{(13)}  + \Delta_{(132)} \nonumber\\& \qquad+ \Delta_{(13,24)} + \Delta_{(1324)} + \Delta_{(134)} + \Delta_{(1342)} + \Delta_{(14)} + \Delta_{(142)} + \Delta_{(14,23)} + \Delta_{(1423)} \nonumber\\& \qquad+ \Delta_{(143)} + \Delta_{(1432)} + \Delta_{(23)} + \Delta_{(234)} + \Delta_{(24)}  + \Delta_{(243)} + \Delta_{(34)})\left|\imm{4}T^{1122,1112}\right|^2  \nonumber\\& \qquad+ \frac{1}{16}(\Delta_{\e} + \Delta_{(12)} + \Delta_{(12,34)} - \Delta_{(13,24)} - \Delta_{(1324)} - \Delta_{(14,23)} - \Delta_{(1423)} \nonumber\\& \qquad+ \Delta_{(34)})\left|\imm{3,1}T^{1122,1112}\right|^2,
\end{align}
where $\Delta_\sigma$ is defined in Eq.~\eqref{eq:deltadef}.

For input state $\ket{22;1122;\vec{\tau}}$ and measurement operator $M_{13}$, the scattering matrix
\begin{equation}
  T^{1122,1222} =
  \begin{pmatrix}
    U_{11} & U_{12} & U_{12} & U_{12} \\
    U_{11} & U_{12} & U_{12} & U_{12} \\
    U_{21} & U_{22} & U_{22} & U_{22} \\
    U_{21} & U_{22} & U_{22} & U_{22} \\
  \end{pmatrix}.
\end{equation}
This matrix has two distinct immanants of the $\inlineyng{4},\inlineyng{3,1}$ types,
\begin{align}
  \imm{4}T^{1122,1222} &= 12U_{21}U_{22}U_{12}^2 + 12U_{11}U_{22}^2U_{12}, \\
  \imm{3,1}T^{1122,1222} &= 4U_{11}U_{12}U_{22}^2 - 4U_{12}^2U_{21}U_{22}.
\end{align}
The coincidence rate is given by
\begin{align}
  C^{1122,1222}(\vec{\tau}) &= \frac{1}{144}(\Delta_{\e} + \Delta_{(12)} + \Delta_{(123)} + \Delta_{(12,34)} + \Delta_{(1234)} + \Delta_{(124)} + \Delta_{(1243)} + \Delta_{(13)} + \Delta_{(132)} \nonumber\\&\qquad + \Delta_{(13,24)} + \Delta_{(1324)} + \Delta_{(134)} + \Delta_{(1342)} + \Delta_{(14)} + \Delta_{(142)} + \Delta_{(14,23)} + \Delta_{(1423)} \nonumber\\&\qquad + \Delta_{(143)} + \Delta_{(1432)} + \Delta_{(23)} + \Delta_{(234)} + \Delta_{(24)} + \Delta_{(243)} + \Delta_{(34)})\left|\imm{4}T^{1122,1222}\right|^2 \nonumber\\&\qquad + \frac{1}{16}(\Delta_{\e} + \Delta_{(12)} + \Delta_{(12,34)} - \Delta_{(13,24)} - \Delta_{(1324)} - \Delta_{(14,23)} - \Delta_{(1423)} \nonumber\\&\qquad+ \Delta_{(34)})\left|\imm{3,1}T^{1122,1222}\right|^2,
\end{align}
where $\Delta_\sigma$ is defined in Eq.~\eqref{eq:deltadef}.

The two pairs of photons are completely distinguishable in the limit $\tau_2-\tau_2\to\infty$. In this case, the difference
\begin{align}
  h_c &= \frac{p^2}{384}\bigg(4\left|\imm{4}T^{1122,1122}\right|^2 + 8\left|\imm{2,2}T^{1122,1122}\right|^2 + (3+2\sqrt{2})\left|\imm{3,1}T^{1122,1122}\right|^2 \nonumber\\&\qquad+ 
4\left|\imm{4}T^{1122,1112}\right|^2 + 4(3+2\sqrt{2})\left|\imm{3,1}T^{1122,1112}\right|^2 \nonumber\\&\qquad+ 4\left|\imm{4}T^{1122,1222}\right|^2 + 4(3+2\sqrt{2})\left|\imm{3,1}T^{1122,1222}\right|^2\bigg).
\end{align}

\end{widetext}
\end{document}